\newcommand{\dist}{\text{dist}}
\newcommand{\Galt}{G_{alt}}
\newcommand{\Valt}{V_{alt}}
\newcommand{\Ealt}{E_{alt}}
\newcommand{\Palt}{P_{alt}}
\newcommand{\Gobs}{G_{obs}}
\newcommand{\Vobs}{V_{obs}}
\newcommand{\Eobs}{E_{obs}}
\newcommand{\Pobs}{P_{obs}}
\def \polylog { \text{\rm polylog~} }
\crefname{lemma}{Lemma}{Lemmas}
\crefname{claim}{Claim}{Claims}
\title{Better Lower Bounds for Shortcut Sets and Additive Spanners via an Improved Alternation Product}
\title{Better Lower Bounds for Shortcut Sets and Additive Spanners via an Improved Alternation Product}
\author{Kevin Lu \thanks{Citadel Securities}
\and Virginia Vassilevska Williams \thanks{MIT. Supported by an NSF CAREER Award, NSF Grants CCF-1528078, CCF-1514339 and CCF-1909429, a BSF Grant BSF:2012338, a Google Research Fellowship and a Sloan Research Fellowship.}
\and Nicole Wein \thanks{DIMACS. This work was done while the author was at MIT supported by NSF Grant CCF-1514339..}
\and Zixuan Xu \thanks{MIT}}
\date{}
\begin{document}

\maketitle
\begin{abstract} \small\baselineskip=9pt 
    We obtain improved lower bounds for additive spanners, additive emulators, and diameter-reducing shortcut sets. Spanners and emulators are sparse graphs that approximately preserve the distances of a given graph. A shortcut set is a set of edges that when added to a directed graph, decreases its diameter. The previous best known lower bounds for these three structures are given by Huang and Pettie \cite{Huang2018LowerBO}. For $O(n)$-sized spanners, we improve the lower bound on the additive stretch from $\Omega(n^{1/11})$ to $\Omega(n^{2/21})$. For $O(n)$-sized emulators, we improve the lower bound on the additive stretch from $\Omega(n^{1/18})$ to $\Omega(n^{1/16})$. For $O(m)$-sized shortcut sets, we improve the lower bound on the graph diameter from $\Omega(n^{1/11})$ to $\Omega(n^{1/8})$.
    
    Our key technical contribution, which is the basis of all of our bounds, is an improvement of a graph product known as an \emph{alternation product}. 
   
\end{abstract}

\pagenumbering{arabic}

\section{Introduction}\label{sec:intro}

Given a graph $G$, consider the problem of constructing a graph $H$ that preserves some information about $G$, such as distances or reachability, while optimizing for some other property of $H$. For example, a \emph{spanner} is a subgraph $H$ of $G$ that approximately preserves distances while being as sparse as possible. Another example is a \emph{shortcut set}, which is a (small) set of edges that when added to a directed graph $G$ produces a graph $H$ which preserves the reachability structure of $G$ while reducing the diameter as much as possible. In these examples, and in most other examples, the purpose of the desired property of $H$ (e.g. sparsity and small diameter) is that these properties make it easier to perform computation on $H$ than on $G$. In this case, when faced with a problem on a graph $G$ it can be useful to first compute such a graph $H$ and then solve the problem on $H$. Then, if $H$ preserves the appropriate information about $G$, the solution computed on $H$ will inform the solution for $G$. 

In addition to spanners and shortcut sets, there are many examples of such structures in the literature such as emulators, distance preservers, distance emulators, reachability preservers, transitive closure spanners, diameter and eccentricity spanners, and hopsets. 

\paragraph{Additive Spanners and Emulators.}

Spanners and emulators are widely studied structures that aim to answer the question ``how much can graphs be compressed without compromising too much distance information?" Spanners were first introduced in 1989 by Peleg and Sch\"{a}ffer \cite{peleg1989graph}, and emulators were first brought up by Althofer, Das, Dobkin, Joseph and Soares \cite{ADDJS_steiner_spanner93} as ``Steiner spanners" before being formally introduced by Dor, Halperin, and Zwick \cite{dor2000all}. 
Spanners and emulators have a variety of applications including routing schemes, parallel and distributed algorithms, and distance oracles. See \cite{ahmed2020graph} for a recent survey on all types of spanners and emulators and their many applications. There is also a very recent experimental study comparing different spanner constructions \cite{chimani2021spanner}.

Given an undirected unweighted graph $G=(V,E)$, an \emph{additive spanner} with \emph{stretch} $\beta$ is a subgraph $H\subseteq G$ such that for all $u,v\in V$,

\[\dist_G(u,v)\leq \dist_H(u,v)\leq \dist_G(u,v) + \beta,\]
where $\beta$ can depend on $n$. An \emph{emulator} is the same as a spanner except it can be any (weighted) graph i.e. not necessarily a subgraph of $G$. When constructing additive spanners and emulators, the goal is to obtain as sparse of a graph as possible while still getting a good approximation of the distances.

We focus on \emph{additive} spanners and emulators, which only allow for additive error, but in general spanners and emulators can also have multiplicative error or mixed additive/multiplicative error (e.g. very recently \cite{elkin2021ultra}). 

There has been a long line of work on additive spanners and emulators. See \cref{tab:spanner_bounds,tab:emulator_bounds} for the best known upper and lower bounds for additive spanners and emulators, respectively, for various stretches and sparsities.
\afterpage{
\begin{table}[t!]
\centering
{\renewcommand{\arraystretch}{1.2}
\begin{tabular}{|c|c|c|c|}
\hline
{\bf Citation}                               & {\bf Spanner Size }                 & {\bf Additive Stretch} & {\bf Notes }\\ \hline
Aingworth, Chekuri, & $\Tilde{O}(n^{3/2})$ & 2 & \\
Indyk, Motwani \cite{aingworth1999fast} & & &
\\ \hline
Elkin \& Peleg \cite{EP01add2spanner} & $O(n^{3/2})$ & 2 & \\ \hline
Chechik \cite{chechik2013new} & $\Tilde{O}(n^{7/5})$ & 4 & \\ \hline
Baswana, Kavitha, & $O(n^{4/3})$ & 6 & \\
Mehlhorn, Pettie \cite{baswana2010additive} & & &  \\ \hline
Chechik \cite{chechik2013new} & $O(n^{20/17+\epsilon})$ & $O(n^{4/17-3\epsilon/2})$ & $0 \le \epsilon<8/17$ \\ \hline
Bodwin \& Vassilevska W. \cite{BW} & $\Tilde{O}(n^{10/7-\epsilon})$ & $O(n^\epsilon)$ & $15/49 \le \epsilon< 3/7$ \\ \hline
Bodwin \& Vassilevska W. \cite{BW} & $\Tilde{O}(n^{5/4 - 5\epsilon/12})$ & $O(n^\epsilon)$ & $3/13 \le \epsilon\leq 15/49$ \\ \hline
Bodwin \& Vassilevska W. \cite{BW} & $\Tilde{O}(n^{4/3-7\epsilon/9})$ & $O(n^\epsilon)$ & $0 \le \epsilon \le 3/13$ \\ 
\hline \hline
Abboud \& Bodwin \cite{AB} & $O(n^{4/3-\epsilon})$ & $\Omega(n^\delta)$ & $\delta = \delta(\epsilon) $ \\ \hline
Abboud \& Bodwin \cite{AB} & $O(n)$ & $\Omega(n^{1/22})$ & \\ \hline
Huang \& Pettie \cite{Huang2018LowerBO} & $O(n)$ & $\Omega(n^{1/11})$ & \\ \hline
\textbf{New} & $O(n)$ & $\Omega(n^{2/21})$ &  \\ \hline
\end{tabular}
}
\caption{Upper and lower bounds for additive spanners. Entries above the doubled lines are upper bounds, while entries below them are lower bounds. All lower bounds are included, but only the best known upper bounds.}
\label{tab:spanner_bounds}
\end{table}

\begin{table}[h!]
\centering
{\renewcommand{\arraystretch}{1.2}
\begin{tabular}{|c|c|c|c|}
\hline
{\bf Citation}                               & {\bf Emulator Size }                 & {\bf Additive Stretch} & {\bf Notes }\\ \hline
Aingworth, Chekuri, & $O(n^{3/2})$ & 2 & \\
 Indyk, Motwani \cite{aingworth1999fast} &  & &
\\ \hline
Dor, Halperin, Zwick \cite{DHZ} & $\Tilde{O}(n^{4/3})$ & 4 & \\ \hline
Pettie \cite{pettie2009low} & $O(n)$ & $O(n^{1/4})$ & (implicit in \cite{pettie2009low})\\
\hline \hline
Abboud \& Bodwin \cite{AB} & $O(n^{4/3-\epsilon})$ & $\Omega(n^\delta)$ & $\delta = \delta(\epsilon) $ \\ \hline
Abboud \& Bodwin \cite{AB} & $O(n)$ & $\Omega(n^{1/22})$ & \\ \hline
Huang \& Pettie \cite{Huang2018LowerBO} & $O(n)$ & $\Omega(n^{1/18})$ & \\ \hline
\textbf{New} & $O(n)$ & $\Omega(n^{1/16})$ &  \\ \hline
\end{tabular}
}
\caption{Upper and lower bounds for additive emulators. Entries above the doubled lines are upper bounds, while entries below them are lower bounds. All lower bounds are included, but only the best known upper bounds.}
\label{tab:emulator_bounds}
\end{table}
}

First we discuss prior work on additive spanners. For constant additive stretch, spanners with stretches 2, 4, and 6 are known with sizes $\tilde{O}(n^{3/2})$, $\tilde{O}(n^{7/5})$, and $\tilde{O}(n^{4/3})$, respectively \cite{aingworth1999fast,dor2000all,bollobas2005sparse,baswana2010additive,chechik2013new}. From the lower bounds side, Abboud and Bodwin proved that $\tilde{O}(n^{4/3})$ is tight in the sense that that any purely additive spanner of sparsity $\tilde{O}(n^{4/3-\epsilon})$, for any constant $\epsilon >0$  has  at  least  polynomial in $n$ additive  error \cite{AB}. 

On the other side of the trade-off between sparsity and stretch are spanners on $O(n)$ edges. This setting is the focus of this paper. Since $\Omega(n)$ edges are needed to ensure connectivity, both spanners and emulators need a linear number of edges to have finite error. Thus the setting of $O(n)$ edges is very interesting and natural.

Additive spanners with a near-linear number of edges and additive stretch polynomial in $n$ were first presented by Bollob\'as, Coppersmith and Elkin \cite{bollobas2005sparse}. Later Pettie \cite{pettie2009low} showed that every graph has an $O(n)$-size spanner with
additive stretch $\tilde{O}(n^{9/16})$. Bodwin and Vassilevska W. \cite{BW} improved this to $\tilde{O}(n^{3/7+\epsilon})$ for all $\epsilon>0$. From the lower bounds side, Abboud and Bodwin \cite{AB} showed that $O(n)$-size spanners require additive stretch $\Omega(n^{1/22})$. Huang and Pettie \cite{Huang2018LowerBO} improved this to stretch $\Omega(n^{1/11})$. We  further improve this bound to $\Omega(n^{2/21})$. 

There are also known results for the mid-range of the trade-off between sparsity and stretch (see \cref{tab:spanner_bounds}).

Now we turn to emulators.
For emulators, all of the upper bounds for spanners immediately hold but there are also better known upper bounds and some lower bounds. See \cref{tab:emulator_bounds}. 

We study $O(n)$-sized emulators, so we highlight those results here. Emulators of size $O(n)$ have been studied in \cite{pettie2009low,baswana2010additive,BW15,BW,AB,Huang2018LowerBO} and the best known construction has additive stretch $\tilde{O}(n^{1/4})$ (see \cite{pettie2009low,Huang2018LowerBO}). From the lower bounds side, Abboud and Bodwin \cite{AB} showed that $O(n)$-size emulators require additive stretch $\Omega(n^{1/22})$. Huang and Pettie \cite{Huang2018LowerBO} improved the lower bound to $\Omega(n^{1/18})$. We further improve this bound to $\Omega(n^{1/16})$.

This paper and the aforementioned work focuses on existential bounds for spanners and emulators, but there has also been work on the running time of algorithms to construct spanners and emulators (e.g. \cite{woodruff2010additive, knudsen2014additive,knudsen2017additive, AlDhalaan2021FastCO}).

\paragraph{Shortcut Sets.}

Let $G = (V,E)$ be a directed unweighted graph and $G^* = (V, E^*)$ be its transitive closure i.e. $(u,v)\in E^*$ if $v$ is reachable from $u$ in $G$. The \emph{diameter} of $G$ is the maximum over all of the pairwise distances $\dist_G(u,v)$ for $(u,v)\in E^*$. A \emph{shortcut set} is a set of edges (namely ``shortcuts") $E'\subset E^*$ chosen to minimize the diameter of the graph $G' = (V, E\cup E')$. For a graph on $n$ vertices and $m$ edges, the literature has mainly focused on how small the diameter can become after adding $O(n)$ or $O(m)$ shortcuts. The known results for this problem are shown in Table~\ref{tab:shortcut_bounds}. 

\begin{table}[h]
    \centering
    {\renewcommand{\arraystretch}{1.2}
    \begin{tabular}{ |c|c|c| } 
    \hline
      \textbf{Citation} & \textbf{Shortcut Set Size} & \textbf{Diameter}  \\ 
      \hline 
     Folklore & $O(n)$ & $\tilde{O}(\sqrt{n})$ \\ 
     \hline
     Folklore & $O(m)$ & $\tilde{O}(n/\sqrt{m})$ \\ 
     \hline 
     Hesse \cite{Hesse03} & $O(mn^{1/17})$ & $\Omega(n^{1/17})$\\
     \hline
     Huang \& Pettie \cite{Huang2018LowerBO} & $O(n)$ & $\Omega(n^{1/6})$\\
     \hline
     Huang \& Pettie \cite{Huang2018LowerBO} & $O(m)$ & $\Omega(n^{1/11})$\\
     \hline
     \textbf{New} & $O(m)$ & $\Omega(n^{1/8})$\\
     \hline
     
    \end{tabular}
    }
    \caption{Upper bounds and lower bounds on shortcut sets.}
    \label{tab:shortcut_bounds}
\end{table}

Shortcut sets were first explicitly introduced by Thorup \cite{thorup_digraph} in 1992. He conjectured that for every directed graph there is a shortcut set of size $m$ that reduces the diameter to at most $\polylog(n)$.
This conjecture was confirmed for trees and then for planar graphs \cite{yao1982space,chazelle1987computing,thorup_digraph,thorup1995shortcutting}, but refuted for general graphs by
Hesse \cite{Hesse03}. Hesse constructed a graph with $m=\Theta(n
^{19/17})$ edges and diameter $\Theta(n
^{1/17})$ such that reducing the diameter at all 
requires $\Omega(mn^{1/17})$ shortcuts. More generally, he showed there exist graphs
with $m = n^{1+\epsilon}$ edges and diameter $n^\delta$, $\delta = \delta(\epsilon)$, that require $\Omega(n^{2-\epsilon})$
shortcuts to make the
diameter $o(n^\delta)$. See also \cite{abboud2018hierarchy} for an alternative proof of this result. Huang and Pettie \cite{Huang2018LowerBO} improved the diameter lower bound by proving that $O(n)$-size shortcut sets cannot reduce the diameter below $\Omega(n^{1/6})$, and
that $O(m)$-size shortcut sets cannot reduce the diameter below $\Omega(n^{1/11})$. We improve the latter bound to $\Omega(n^{1/8})$.

From the upper bound side, the best known construction for $O(n)$-sized shortcut sets is given by a folklore algorithm that achieves diameter $\tilde{O}(\sqrt{n})$. The algorithm is simply to pick a random set $S$ of $O(\sqrt{n})$ vertices and include shortcuts between all $O(n)$ pairs $u,v\in S$ such that $v$ is reachable from $u$. There is an analogous algorithm for $O(m)$-sized shortcut sets that achieves diameter $\tilde{O}(n/\sqrt{m})$. Whether one can do better than these folklore algorithms is a very interesting open problem.

One notable application of shortcut sets is for the problem of parallel single-source reachability. It is a long-standing open problem to develop an algorithm  for this problem that is simultaneously work-efficient and has low depth (i.e.~time). There was no known work-efficient algorithm for this problem with nontrivial parallelism until quite recently when Fineman \cite{fineman2019nearly} proved that there is an algorithm with $\tilde{O}(m)$ work and $\tilde{O}(n^{2/3})$ depth. The main tool in his algorithm is the use of shortcut sets, which were also previously used in this context by Ullman and Yannakakis \cite{ullman1991high}. Fineman's main technical contribution is an $\tilde{O}(m)$ time algorithm to to construct a $\tilde{O}(n)$-sized shortcut set that reduces the diameter to $\tilde{O}(n^{2/3})$. We note that the above folklore algorithms achieve lower diameter, however it is not known how to construct them in near-linear time. This result of Fineman was subsequently improved by Jambulapati, Liu, and Sidford \cite{liu2019parallel}, again using shortcut sets, who reduced the diameter to $n^{1/2+o(1)}$, nearly matching the diameter achieved by the folklore algorithm. A trade-off between work and depth was then obtained by Cao, Fineman, and Russell \cite{cao2020improved}, again using shortcut sets. These techniques for parallel reachability also extend to the streaming and distributed settings \cite{liu2019parallel}.

A structure that is closely related to a shortcut set is a \emph{transitive closure spanner} (see the survey \cite{raskhodnikova2010transitive}). Like a shortcut set, a transitive closure spanner is a graph that preserves the reachability structure of a directed graph while reducing the diameter as much as possible. The difference is that a shortcut set is a set of edges that is added to the original graph, while a transitive closure spanner is a separate graph whose sparsity is measured by its total number of edges. Most of the work on transitive closure spanners is not directly relevant to the goals of this paper, because either it is computational rather than existential, or studies special classes of graphs.

Another related structure is shortcut sets for undirected graphs. Using a linear number of shortcuts in an undirected graph, one can simply add a star which reduces the diameter to 2. Thus, the interesting parameter regime for this problem on undirected graphs is when only a sublinear number of shortcuts is allowed. This line of work was initiated by Chung and Garey \cite{chung1984diameter} and has been studied in e.g. \cite{alon2000decreasing,demaine2010minimizing,tan2017shortcutting}.

\paragraph{Lower Bound Constructions.}

Lower bound constructions for additive spanners, additive emulators, and shortcut sets use a common set of techniques, where the goal is to construct a graph with a set of ``critical pairs'' of vertices that have the following key properties: 
\begin{enumerate}
\item[(P1)] There are many critical pairs.
\item[(P2)] The path between each critical pair is \emph{unique}. Thus, we unambiguously call each such path a \emph{critical path}.
\item[(P3)] Each pair of critical paths is \emph{nearly disjoint}. (Traditionally this means that two intersecting critical paths can only intersect at either one vertex or one edge.)
\item[(P4)] Each critical path is \emph{long}.
\end{enumerate} Graphs that satisfy these conditions have been independently discovered by Alon~\cite{alon2002testing}, Hesse \cite{Hesse03}, and Coppersmith and
Elkin~\cite{CE}, and also used in several other papers~\cite{Huang2018LowerBO,AB,abboud2018hierarchy}.

Given a ``base graph'' with these properties, one can then take a \emph{product} of two copies of the base graph to obtain a graph with specific properties tailored to a particular application. Several types of products have been studied in the literature such as the \emph{alternation product} (discovered independently by Hesse \cite{Hesse03} and Abboud and Bodwin \cite{AB}, and also used in~\cite{abboud2018hierarchy,Huang2018LowerBO})\footnote{Note that not all of these papers use the terminology ``alternation product''.} and the \emph{obstacle product} (discovered by  \cite{AB} and also used in \cite{abboud2018hierarchy,Huang2018LowerBO,Bod}). We use both the alternation product and the obstacle product in this paper. 

The purpose of an alternation product is to increase the number of critical pairs (property (P1)) compared to the base graph. This comes at the cost of decreasing the length of the critical paths (relative to the number of vertices in the graph). (property (P4)). The core contribution of this paper is an improvement of the alternation product. 

\subsection{Results}

We improve the lower bounds for $O(n)$-sized additive spanners and emulators, and $O(m)$-sized shortcut sets. In particular, we prove the following:

\begin{restatable}[Spanner lower bound]{theorem}{thmspannerLB}
\label{thm:spanner_LB}
There exists an undirected graph $G$ with $n$ vertices such that any spanner of $G$ with $O(n)$ edges has additive stretch $\Omega(n^{2/21})$.
\end{restatable}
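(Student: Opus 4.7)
The plan is to follow the standard three-step recipe for proving sparse spanner lower bounds, with the paper's improved alternation product as the crucial new ingredient. First, I start with a base graph $G_0$ on $n_0$ vertices satisfying properties (P1)--(P4): a large collection of critical pairs whose connecting paths are unique, pairwise nearly edge-disjoint, and long. Such a base graph is provided off the shelf by the Alon--Hesse--Coppersmith--Elkin constructions, with well-understood parameters (number of critical pairs, critical path length, vertex and edge count).

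Second, I apply the improved alternation product to $G_0$ to obtain a graph $G_1$ with substantially more critical pairs while retaining uniqueness and near edge-disjointness of critical paths. The alternation product trades length for quantity: critical paths in $G_1$ are shorter relative to the total vertex count than those in $G_0$, but there are many more of them. The improvement over the Huang--Pettie alternation product is exactly the step that drives the quantitative improvement from $\Omega(n^{1/11})$ to $\Omega(n^{2/21})$, so the analysis must carefully track how each of the four parameters (vertex count, edge count, number of pairs, path length) transforms under the new product.

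Third, I apply the obstacle product to $G_1$, replacing each edge by a long path of a carefully chosen length $L$, producing the final graph $G$. The key effect is that if a spanner $H \subseteq G$ omits any single edge belonging to a critical path, the distance between the corresponding critical pair in $H$ blows up by roughly $L$. Since the critical paths are nearly edge-disjoint but $H$ has only $O(n)$ edges, a simple counting argument forces $H$ to omit an edge from at least one critical path, which yields the claimed additive stretch.

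The hard part, which is also where the new alternation product pays off, is parameter balancing. I would write the base graph parameters, the degree of alternation, and the obstacle length $L$ as unknowns, then impose the constraints that (i) the final graph has exactly $n$ vertices, (ii) the $O(n)$ edge budget is tight against the total length of the critical paths, and (iii) no critical pair has its path preserved by the spanner. Optimizing the resulting system of exponent inequalities, and invoking the improved scaling given by the new alternation product in place of the Huang--Pettie one, should move the optimum from $1/11$ to $2/21$, giving the stretch $\Omega(n^{2/21})$.
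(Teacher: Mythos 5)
There is a genuine gap in your step 3. The obstacle product used here is not just edge subdivision: it consists of subdividing every edge of $\Galt^2$ into a path of length $D$ \emph{and} replacing every interior vertex by a bipartite clique with $|V_2^+(r)|$ vertices per side, and the clique edges are what the whole counting argument is charged against. Your proposed counting---``the critical paths are nearly edge-disjoint, $H$ has only $O(n)$ edges, so $H$ must omit an edge of some critical path''---does not go through in the subdivision-only graph. Although any two critical paths of $\Galt^2$ share at most one edge, each single edge of $\Galt^2$ lies on $\Theta(|V_2^+(r)|)=\Theta(r^{2/3})$ different critical paths (the second vector $\gamma'$ is free), so the union of all subdivided critical paths has only $\Theta(|\Ealt^2|\cdot D)=\Theta(n)$ edges. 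An $O(n)$-edge spanner can therefore contain \emph{every} critical path in full, and no pigeonhole contradiction arises; your per-path requirement of ``at least one edge'' yields a bound of only $|\Pobs|$ forced edges, which is a factor $D$ too weak (it is essentially the emulator-style count, which gives the weaker $n^{2/29}$-type tradeoff, not $2/21$).

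The paper recovers exactly this missing factor of $D$ through the clique replacement. A clique edge inside $K_v$ determines both the $\Galt^2$-edge entering $v$ and the one leaving $v$, so by the single-edge-intersection property no two critical paths share a clique edge: each critical path owns $2D-1$ private clique edges. Moreover a missing clique edge is cheap---the detour inside the bipartite clique costs only $+2$---so a spanner with additive stretch at most $2D-1$ must retain at least $D$ of the private clique edges of every critical path (otherwise either $D$ clique detours or a deviation in the $\Galt^2$-projection, each costing $2D$, is forced). Hence the spanner needs at least $D|\Pobs|$ clique edges, and with $r=\Theta(D^{3/2})$ one has $D|\Pobs|=\Theta(|\Vobs|)=\Theta(D^{21/2})$, so any spanner with fewer edges suffers stretch at least $2D=\Omega(n^{2/21})$. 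Without the clique gadget and the ``$D$ cheap private edges per path'' accounting, your outline cannot reach the claimed exponent.
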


\cref{thm:spanner_LB} is an improvement over the bound of $\Omega(n^{1/11})$ by Huang and Pettie \cite{Huang2018LowerBO}.

\begin{restatable}[Emulator lower bound]{theorem}{thmemulatorLB}
\label{thm:emulator_LB}
There exists an undirected graph $G$ on $n$ vertices such that any emulator of $G$ with $O(n)$ edges has additive stretch $\Omega(n^{1/16})$.
\end{restatable}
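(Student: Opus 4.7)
The plan is to follow the standard additive-emulator lower bound template, with the paper's improved alternation product as the central new ingredient. First, I would start from an Alon--Hesse--Coppersmith--Elkin-style base graph $G_0$ satisfying properties (P1)--(P4): $\Theta(n_0)$ critical pairs, each joined by a unique shortest path of length $\ell_0 = \Theta(n_0^\alpha)$ for an appropriate constant $\alpha$, with the nearly-disjoint property. I would then apply the new alternation product (the paper's main technical contribution) to two copies of $G_0$ to obtain an intermediate graph $G_1$ in which the number of critical pairs has been boosted relative to the vertex count, at the cost of a controlled reduction in critical-path length. The advantage over Huang--Pettie is precisely that the new alternation product gives a better trade-off between these two quantities.

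Next, I would compose $G_1$ with an obstacle product tuned to the emulator setting. The role of the obstacle product is to convert a ``shortcut taken by the emulator across a critical path'' into quantifiable additive error: replace each vertex by a layered gadget of appropriate size so that any short-cutting edge $e \in H$ that bypasses a long stretch of a critical path either (i) contributes to the distance budget of only few critical pairs, or (ii) forces a large additive error for the pairs it does serve. Unlike the spanner case, where one can exploit that $H$ is a subgraph, the emulator argument must bound the ``usefulness'' of arbitrary weighted edges; this is precisely what the emulator-flavored obstacle product enforces by making the gadgets asymmetric so that cross-gadget shortcuts skip too many layers.

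The counting step will then be: assume for contradiction that $H$ is an emulator with $|E(H)| = O(n)$ and additive stretch $\beta = o(n^{2/29})$. For each of the $N$ critical pairs $(u,v)$ in the final graph, $H$ must realize a $u$-$v$ walk of total weight at most $L + \beta$ where $L$ is the critical-path length. Using the near-disjointness of critical paths together with the obstacle gadgets, I would show that each edge of $H$ can contribute nontrivially to at most roughly $\beta$ critical pairs (since a shortcut across more than $\beta$ ``useful'' gadget layers overshoots the budget). A global double count then gives $|E(H)| \cdot \beta \geq N \cdot (L/\beta)$, which rearranges to a lower bound on $\beta$ in terms of $n, N, L$; substituting the parameters produced by the improved alternation product combined with the obstacle product yields exactly $\beta = \Omega(n^{2/29})$.

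The main obstacle will be twofold. First, I must verify that the improved alternation product preserves properties (P2) and (P3) — unique shortest paths and near-disjointness — because if these break, the counting argument collapses; this is where the paper's core technical work lives and where I would expect the proof to spend most of its effort. Second, the three-way parameter balance among the alternation blow-up, the gadget length, and the stretch budget $\beta$ must be solved as an optimization to land precisely on the exponent $2/29$; I would set up the exponents symbolically, optimize, and then confirm that $2/29$ is the emulator-optimal value, just as $2/21$ is optimal in the spanner regime where a strictly smaller obstacle gadget suffices.
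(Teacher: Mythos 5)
Your high-level skeleton (base graph $\to$ improved alternation product on two copies $\to$ obstacle product $\to$ counting $\to$ parameter balancing) matches the paper, but the emulator-specific core of your plan has a genuine gap. First, the paper does not use an ``emulator-flavored'' obstacle product with asymmetric gadgets: it uses exactly the same graph $\Gobs$ (edge subdivision into paths of length $D$ plus bipartite-clique replacement of interior vertices) for both spanners and emulators. Second, and more importantly, your counting step is not sound as stated. You claim each emulator edge ``contributes nontrivially to at most roughly $\beta$ critical pairs,'' but in this construction a single edge of $\Galt^2$ lies on $\Theta(r^{2/3})$ distinct critical paths (an even-layer edge fixes $\gamma$ but leaves $\gamma'$ free), and with the emulator parameter setting $r=\Theta(D^3)$ this is $\Theta(D^2)$, far exceeding the stretch budget $\beta=\Theta(D)$; so a weighted emulator edge placed along the corresponding subdivided path can be relevant to many more than $\beta$ pairs, and the proposed double count $|E(H)|\cdot\beta\ge N\cdot(L/\beta)$ does not follow.

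The paper's actual argument avoids counting ``pairs served per edge'' altogether. It first proves the spanner lemma: any spanner of $\Gobs$ with additive stretch at most $2D-1$ must contain at least $D|\Pobs|$ clique edges (if a critical path misses $D$ of its clique edges, the bipartite-clique and subdivision structure forces a detour of length at least $2D$). For emulators it then reduces to this lemma: normalize each emulator edge $(u,v)$ to have weight $d_{\Gobs}(u,v)$, replace each weighted edge on the emulator's critical-pair shortest paths by a corresponding shortest path in $\Gobs$, and observe that each emulator edge contributes at most $2D$ clique edges to the resulting spanner; hence $2D|E(H)|\ge D|\Pobs|$, i.e.\ $|E(H)|\ge|\Pobs|/2$. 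Balancing $|\Pobs|=\Theta(|\Vobs|)$ (rather than $D|\Pobs|=\Theta(|\Vobs|)$ as in the spanner case) gives $r=\Theta(D^3)$, $n=\Theta(D^{29/2})$, and stretch $2D=\Omega(n^{2/29})$ --- this weaker per-edge accounting, not a different gadget, is exactly why the emulator exponent $2/29$ is worse than the spanner exponent $2/21$. To repair your proposal you would need to replace your per-pair double count with such a clique-edge (or equivalent) accounting that is robust to one emulator edge serving many critical pairs.
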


\cref{thm:emulator_LB} is an improvement over the bound of $\Omega(n^{1/18})$ by Huang and Pettie \cite{Huang2018LowerBO}.

\begin{restatable}[Shortcut lower bound]{theorem}{thmshortcutLB}
\label{thm:shortcut_LB}
There exists a directed graph $G = (V,E)$ with $n$ vertices and $m$ edges such that for any shortcut set $E'$ of size $O(m)$, the graph $G' = (V, E\cup E')$ has diameter $\Omega(n^{1/8})$.
\end{restatable}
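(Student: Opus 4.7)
The plan is to construct a directed ``hard instance'' by starting with a standard base graph satisfying properties (P1)--(P4) from \cref{sec:intro}, applying the improved alternation product that is the main technical contribution of this paper, and then passing through the obstacle product to produce the directed layered structure required for the shortcut set setting. The overall flow mirrors the Huang--Pettie \cite{Huang2018LowerBO} argument for the $\Omega(n^{1/11})$ bound, but replacing their alternation product step with our improved variant should push the exponent up to $1/8$.

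First, I would fix a base graph $G_0$ from the Alon~\cite{alon2002testing}/Hesse~\cite{Hesse03}/Coppersmith--Elkin~\cite{CE} family, which provides many critical pairs whose critical paths are long, unique, and pairwise nearly disjoint. Then I would apply the improved alternation product (constructed in this paper) to two copies of $G_0$: intuitively, this ``interleaves'' layers of the two copies so that each critical pair of the product corresponds to a pair of critical pairs of the factors, thereby multiplying (P1) while controlling the loss in (P4). The improvement over the previously used alternation product comes from a tighter trade-off between the number of vertices produced and the length of the surviving critical paths, and it is this quantitative improvement that drives the new exponent. After this, I would apply the obstacle product to orient the graph and insert obstacle vertices so that the critical paths become the \emph{unique} directed shortest paths between their endpoints, with the property that any directed edge $(u,v)$ added as a shortcut can substantially shorten at most a single critical pair.

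Once the construction is in place, the lower bound follows from a counting argument. Let $n$ and $m$ denote the vertex and edge counts of the final directed graph, $p$ the number of critical pairs, and $\ell$ the length of each critical path. Near-disjointness of critical paths combined with the obstacle product guarantees that each shortcut $(u,v)\in E'$ can reduce the distance of at most $O(1)$ critical pairs by more than $\ell/2$. Consequently, if $p=\omega(m)$, some critical pair must retain distance $\Omega(\ell)$ in $G'=(V,E\cup E')$, giving diameter $\Omega(\ell)$. Choosing the base graph parameters and the number of alternation layers so that $\ell = \Theta(n^{1/8})$ while keeping $p/m\to\infty$ yields the claimed bound.

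The main obstacle will be verifying that the improved alternation product, which was designed with the undirected spanner/emulator applications in mind, can be composed with the obstacle product in a way that preserves both near-disjointness (P3) and uniqueness of shortest paths (P2) in the directed setting. In particular, the interleaving used in the improved product creates new potential detours between pairs of vertices that were not present before, and I will have to check that orienting layers consistently and placing obstacles carefully rule these out. Once this compatibility is established, balancing the parameters to extract the $n^{1/8}$ exponent is a routine optimization analogous to the computations in \cite{Huang2018LowerBO}.
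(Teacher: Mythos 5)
There is a genuine gap: the construction you propose cannot reach the exponent $1/8$. Taking the improved alternation product of \emph{two} copies of $G_0$ (the graph $\Galt^2$ of \cref{thm:Galt2}) and balancing $|\Palt^2|$ against the edge count gives $r=\Theta(D^{3/2})$ and $n=\Theta(D^{17/2})$, i.e.\ only a diameter bound of $\Omega(n^{2/17})$ --- this is exactly the warm-up in \cref{sec:app_shortcuts}. The $\Omega(n^{1/8})$ bound requires the second construction, $\Galt^3$ (\cref{thm:Galt3}): an alternation product of \emph{three} copies of $G_0$ compressed into $4$ coordinates, which makes the critical paths longer relative to $n$ ($r=\Theta(D^{3/4})$, $n=\Theta(D^8)$). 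Moreover, your counting step assumes each shortcut shortens $O(1)$ critical pairs, but in $\Galt^3$ property (P3) is deliberately relaxed: two critical paths may share a path of length $2$, so a single shortcut spanning such a subpath can shorten up to $|V_2^+(r)|=\Theta(r^{2/3})$ critical pairs. The paper copes with this via \cref{claim:Galt3_p2_bound} together with the amortized argument of \cref{lem:Galt3_diameter}: length-$2$ shortcuts shorten each affected pair by only $1$, so across the (at least $k$) pairs untouched by longer shortcuts the diameter drops by at most $|V_2^+(r)|=O(r^{2/3})=o(D)$, leaving diameter $3D-o(D)$. This amortization is the missing idea without which the three-copy construction cannot be exploited, and without the three-copy construction the stated exponent is unattainable by this technique.

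A second, smaller misconception: the obstacle product plays no role in the shortcut lower bound and would actively hurt it. The base graph and both alternation product graphs are already directed layered graphs, so no orientation step is needed; and since a shortcut set may use arbitrary transitive-closure edges, subdividing edges and inserting bipartite cliques does nothing to limit shortcuts --- it only inflates $n$ by a factor of roughly $D$ while leaving the number of critical pairs unchanged, which strictly worsens the diameter as a function of $n$. In the paper the obstacle product is applied (to $\Galt^2$, not $\Galt^3$) only for the spanner and emulator bounds, where the subgraph/size constraints make clique edges and subdivided edges costly to omit. So the compatibility issue you flag (composing the improved alternation product with the obstacle product in the directed setting) is not the obstacle to overcome; the real work is the $\Galt^3$ construction, its uniqueness and intersection claims, and the length-$2$-shortcut amortization described above.
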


\cref{thm:shortcut_LB} is an improvement over the bound of $\Omega(n^{1/11})$ by Huang and Pettie \cite{Huang2018LowerBO}.

Our core contribution towards obtaining these results is an improvement of the alternation product (introduced above).  Recall that the goal of lower bound constructions for additive spanners, additive emulators, and shortcut sets is to obtain a graph with the properties (P1)-(P4) above. Further recall that applying an alternation product to a base graph boosts property (P1) (number of critical pairs) at the expense of property (P4) (length of critical paths). 

We define two new alternation products, and the key property of both is that they \emph{make up for some of this loss in critical path length (P4)}. That is, we define refined alternation products whose critical paths are longer than in a standard alternation product. The base graph that we use is essentially equivalent to that used in \cite{Huang2018LowerBO}, but our definition of the alternation product is improved.

We call the graphs yielded by our two alternation products $\Galt^2$ and $\Galt^3$ respectively. $\Galt^3$ has even longer critical paths than $\Galt^2$, but at the expense of relaxing property (P3) (near-disjointness of critical paths). In particular, the critical paths in $\Galt^2$ intersect on at most one edge, while the critical paths in $\Galt^3$ intersect on at most a path of length 2. The precise properties of $\Galt^2$ and $\Galt^3$ are specified in the following theorems.

\begin{restatable}{theorem}{thmGalttwo}
\label{thm:Galt2}
(Properties of $\Galt^2$) For any $D,r>0\in \Z$, there exists a graph $\Galt^2(D,r) = (\Valt^2, \Ealt^2)$ with a set $\Palt^2\subset \Valt^2\times \Valt^2$ of critical pairs that has the following properties:
\begin{enumerate}
    \item The numbers of vertices, edges, and critical pairs in $\Galt^2$ are:
    \begin{align*}
    |V_{alt}^2| &= \Theta(D(Dr)^3),\\
    |E_{alt}^2| &=\Theta(D(Dr)^3 r^{2/3}),\\
    |P_{alt}^2| &= \Theta ((Dr)^3 r^{4/3}).
    \end{align*}
    \item For any critical pair $(s,t)\in \Palt^2$, the path between $(s,t)$ is unique and has length $2D$. (Thus we refer to the unique paths between critical pairs as critical paths.)
    \item Every pair of intersecting critical paths intersects on either a single vertex or a single edge.
\end{enumerate}
\end{restatable}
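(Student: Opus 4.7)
The plan is to construct $\Galt^2(D,r)$ as a product of a Huang-Pettie-style base graph $G_B$ with itself, using a refined alternation scheme that produces critical paths of length $2D$ while keeping them nearly disjoint. I would take $G_B$ as an Alon / Coppersmith-Elkin / Hesse-family base graph whose parameters are tuned so that combining several copies via the product yields the $\Theta$-bounds of property 1; such a base graph has critical paths of length $D$ and inherits the standard unique-path and near-disjointness properties (P2)-(P3). The improvement over the standard alternation product is a structural change in how the copies are combined so that each critical path in the product remains as long as $2D$ rather than being shortened by the product operation.

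I would then define $\Galt^2$ so that each vertex is a tuple of base-graph coordinates together with a phase index $i \in [2D]$ dictating which coordinate is ``active'' at each step. Edges are laid down so that advancing the phase by one consists of taking a single base-graph edge in exactly one coordinate, with the ``whose turn it is'' schedule chosen so that each critical pair induces an alternating trajectory of total length $2D$. Critical pairs of $\Galt^2$ are then the compatible tuples of base critical pairs, and the vertex, edge, and critical-pair counts of property 1 follow by direct enumeration of tuples and allowed edges, using that $G_B$ has $\Theta(Dr)$ vertices and average degree $\Theta(r^{2/3})$.

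For property 2, I would argue by induction on the phase index: the current coordinates together with the target critical pair determine which coordinate advances next and, via the unique base critical path in that coordinate, determine the next vertex; hence the entire length-$2D$ path is forced. For property 3, I would project any intersection of two critical paths of $\Galt^2$ down onto each base copy and argue via pigeonhole that an intersection on two or more edges in $\Galt^2$ would force a multi-edge intersection in at least one base copy, contradicting $G_B$'s near-disjointness.

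The main obstacle will be designing the phase schedule and edge rules so that property 3 actually goes through. In a naive alternation product, two critical paths can ``slide'' against each other by being offset in phase while still sharing many base vertices, and the improved construction must impose explicit arithmetic constraints tying the phase to the coordinate being advanced, so that matching phases force matching base-graph positions. Verifying that these constraints simultaneously permit the length-$2D$ critical paths required by property 2 and rule out all long intersections required by property 3, while preserving the counts of property 1, is the subtle balancing act that the construction must achieve.
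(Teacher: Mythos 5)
Your proposal has a genuine gap at its core: it never commits to the one structural choice that makes the theorem's counts possible, and the proof strategy you sketch for properties 2 and 3 is incompatible with that choice. To get $|V_{alt}^2|=\Theta(D(Dr)^3)$ with $2D+1$ layers, each vertex must carry only \emph{three} coordinates in $(\Z/(3Dr)\Z)^3$, which forces the two base copies to \emph{share} the middle coordinate: even steps add $\gamma=(\gamma_1,\gamma_2)\in V_2^+(r)$ to coordinates $(1,2)$ and odd steps add $\gamma'=(\gamma_1',\gamma_2')$ to coordinates $(2,3)$, so the total displacement in coordinate 2 of any $s$--$t$ path mixes contributions from both copies ($D\gamma_2+D\gamma_1'=\sum_k a_2^k+\sum_k b_1^k$). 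Your arguments for properties 2 and 3 --- ``project onto each base copy and invoke the base graph's unique-path / near-disjointness property,'' and the local induction ``the target pair determines which coordinate advances and the unique base critical path determines the next vertex'' --- all presuppose that the two copies live on disjoint coordinate sets, so that each copy can be analyzed independently. That is exactly the standard alternation product, which has four coordinates per vertex and yields $\Theta(D(Dr)^4)$ vertices, not the bounds of property 1. With the shared coordinate, projection is simply unavailable: an alternative path can try to ``trade'' second-coordinate displacement between the $a$-steps and the $b$-steps, and nothing in the base graph's per-copy uniqueness rules this out. Also, the local induction for property 2 is not sound even in spirit, since at every vertex there are $|V_2^+(r)|$ outgoing edges and nothing local forces the step vector; uniqueness has to be a global statement about decompositions of the endpoint displacement.

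The missing ingredients, which the paper supplies, are: (i) the base steps are taken from the \emph{vertex hull} $V_2^+(r)$ of the lattice ball, which is strictly convex, and viewing the second coordinate of a hull vector as a concave function of the first (and vice versa), a Jensen-type inequality shows that the pure outer-coordinate equations $D\gamma_1=\sum_k a_1^k$ and $D\gamma_2'=\sum_k b_2^k$ force $\sum_k a_2^k\le D\gamma_2$ and $\sum_k b_1^k\le D\gamma_1'$; the mixed middle-coordinate equation then forces equality, and strict convexity forces $a^k=\gamma$, $b^k=\gamma'$ for all $k$ (this is the whole proof of uniqueness, and it works only in $\Z$, which is why the modulus $3Dr$ is chosen large enough that the congruences lift to equations); (ii) for the intersection bound, one does not use pigeonhole on base copies but instead observes that two vertices at distance at least 2 on a critical path determine $\gamma_1$ and $\gamma_2'$ by solving congruences on the two \emph{unmixed} coordinates 1 and 3 (the increments being smaller than the modulus), after which $\gamma_2,\gamma_1'$ are determined because no two vertices of the hull share a coordinate; combined with uniqueness this pins down the entire critical path, so two distinct critical paths can share at most a single vertex or a single edge. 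Without the convexity mechanism your ``subtle balancing act'' has no proposed resolution, and with disjoint coordinates (where your projection argument would work) the counts of property 1 fail, so the proposal as written does not prove the theorem.
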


\begin{restatable}{theorem}{thmGaltthree}
\label{thm:Galt3}
(Properties of $\Galt^3$) For any $D,r>0\in \Z$, there exists a graph $\Galt^3(D,r) = (\Valt^3, \Ealt^3)$ with a set $\Palt^3\subset \Valt^3\times \Valt^3$ that has the following properties:
\begin{enumerate}
    \item The numbers of vertices, edges, and critical pairs in $\Galt^3$ are:
    \begin{align*}
    |V_{alt}^3| &= \Theta(D(Dr)^4),\\
    |E_{alt}^3| &= \Theta(D(Dr)^4r^{2/3}),\\
    |P_{alt}^3| &= \Theta((Dr)^4r^2).
    \end{align*}
     \item For any critical pair $(s,t)\in \Palt^3$, the critical path between $(s,t)$ is unique and has length $3D$.
    \item Any pair of critical paths can only intersect on a path of length 2, or a single edge, or a single vertex.
    \item Given a path $p$ of length 2 in $\Galt^3$, at most $O(r^{2/3})$ critical paths can contain $p$.
\end{enumerate}
\end{restatable}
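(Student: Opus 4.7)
The plan is to construct $\Galt^3$ by extending to three levels the alternation construction used for $\Galt^2$. I would start from the same base graph $G_{\mathrm{base}}(D,r)$ underlying \cite{Huang2018LowerBO}: $\Theta(Dr)$ vertices arranged in $D+1$ layers of $\Theta(r)$ vertices each, average degree $\Theta(r^{2/3})$, a collection of $\Theta(r^{4/3})$ length-$D$ critical paths that are pairwise edge-disjoint and unique between their endpoints, and the additional property that any fixed length-$2$ subpath is contained in at most $O(r^{2/3})$ base critical paths. The vertex set $\Valt^3$ would be indexed by tuples $(t,x,y,z)$ where $t\in\{0,\dots,3D\}$ marks a global time layer and $x,y,z$ are vertices in three independent copies of $G_{\mathrm{base}}$ at layers dictated by $t$: during the first time-third $[0,D]$ only coordinate $x$ advances one base edge per step, during $[D,2D]$ only $y$ advances, and during $[2D,3D]$ only $z$ advances, with transitions between phases respecting a prescribed consistency between the frozen and active coordinates. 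Critical pairs in $\Palt^3$ are then the ordered pairs $(s,t)$ whose unique $3D$-edge connecting path traces out a base critical path in each coordinate successively.

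Counting is a direct bookkeeping exercise. With $\Theta(D)$ time layers and $\Theta((Dr)^4)$ consistent tuples per layer (coming from the three base copies together with an index parametrizing the active base critical path), one obtains $|\Valt^3|=\Theta(D(Dr)^4)$. Each edge corresponds to advancing the currently active coordinate, so the average degree equals the base-graph degree $\Theta(r^{2/3})$, giving $|\Ealt^3|=\Theta(D(Dr)^4 r^{2/3})$. A critical path is specified by a triple of base critical paths (one per phase) together with a global positioning; consistency constraints at the two phase boundaries reduce the admissible triples from $\Theta((r^{4/3})^3)$ to $\Theta(r^2)$, and the positioning contributes the remaining factor of $\Theta((Dr)^4)$, for a total of $|\Palt^3|=\Theta((Dr)^4 r^2)$.

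The qualitative properties reduce to propagating base-graph properties through the three phases. Property~2 follows because any length-$3D$ walk between a critical pair must traverse the time-thirds in order, and inside each third the time-layer structure forces the active coordinate to follow a length-$D$ base critical path between fixed endpoints, which is unique by the base graph. Property~3 holds because two distinct critical paths can share edges only within a common time-third and only when they evolve identically on the same base edges there; by base-graph edge-disjointness they can share at most one edge per third, and the only way to produce a length-$2$ overlap is by sharing the last edge of one phase with the first edge of the next. For property~4, a length-$2$ subpath $p$ in $\Galt^3$ either lies entirely within one phase---in which case it projects onto a length-$2$ subpath of the active base copy and the base graph's incidence bound immediately yields $O(r^{2/3})$ compatible critical paths---or straddles a phase boundary, in which case one half-edge fixes the ending base critical path of its coordinate and the other fixes the starting base critical path of the next, producing an $O(r^{2/3})$ bound after accounting for the remaining frozen-coordinate freedom.

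The main obstacle is the case analysis in property~3. Unlike $\Galt^2$, where two critical paths can share at most a single edge, the additional alternation phase in $\Galt^3$ genuinely permits the length-$2$ overlap allowed by the theorem, and one must confirm that no longer overlap is possible without collapsing the two critical paths. Ruling out three consecutive shared edges surrounding a single phase transition, and showing that a shared subpath cannot straddle two phase boundaries at once, requires carefully combining the per-phase edge-disjointness of $G_{\mathrm{base}}$ with the consistency constraints on the frozen coordinates. Carrying this analysis out so that the resulting incidence counts remain tight enough to preserve the $O(r^{2/3})$ bound claimed in property~4 is where the bulk of the technical care will lie.
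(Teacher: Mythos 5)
There is a genuine gap here, and it is structural: what you describe is not an alternation product but a \emph{sequential concatenation} of three independent copies of the base graph (copy $x$ advances during steps $[0,D]$, copy $y$ during $[D,2D]$, copy $z$ during $[2D,3D]$). The paper's $\Galt^3$ interleaves the three copies step by step ($i\equiv 0,1,2 \pmod 3$ advances a different copy), and, crucially, packs the three copies into only four coordinates of $(\Z/(3Dr)\Z)^4$, with steps acting on the overlapping coordinate pairs $(1,2)$, $(2,3)$, $(3,4)$; this coordinate sharing is exactly what produces $\Theta((Dr)^4)$ vertices per layer and $|\Palt^3|=\Theta((Dr)^4 r^2)$, and it is the paper's core contribution. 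Your sequential construction breaks the two qualitative properties you need. For property 3: if two critical paths share one edge inside, say, the first phase, then (same frozen $y,z$, and the base critical path through a base edge is unique) they coincide on the \emph{entire} first phase, a shared subpath of length up to $D$, while still being distinct paths if they use different base critical paths in later phases; your claim that base edge-disjointness limits sharing to ``at most one edge per third'' is therefore false. For property 4: a length-2 subpath inside one phase pins down the active base path and the frozen coordinates, but leaves the directions of the other two phases completely free, giving on the order of $|V_2^+(r)|^2=\Theta(r^{4/3})$ critical paths through it rather than the claimed $O(r^{2/3})$. These failures are fatal for the shortcut application, since a single shortcut could then help many critical pairs. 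The vertex count is also unsubstantiated: three independent copies of a base graph do not give $\Theta((Dr)^4)$ tuples per layer, and the phrase ``index parametrizing the active base critical path'' does not define a graph.

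In the paper, the properties are proved quite differently and the difficulty sits elsewhere. Uniqueness of critical paths (property 2) does not follow phase by phase; because coordinate 2 receives contributions from both $\gamma$ and $\gamma'$, and coordinate 3 from both $\gamma'$ and $\gamma''$, one gets coupled equations of the form $D\gamma_2+D\gamma'_1=\sum_k a_2^k+\sum_k b_1^k$ and $D\gamma'_2+D\gamma''_1=\sum_k b_2^k+\sum_k c_1^k$, which are resolved using the strict convexity of the vertex hull $V_2^+(r)$ (concavity of one coordinate as a function of the other). Property 3 is proved by showing that any two vertices of a critical path lying at distance at least $3$ already determine the full triple $(\gamma,\gamma',\gamma'')$, and property 4 by observing that a fixed length-2 path determines $\gamma,\gamma'$ and two coordinates of the start vertex but leaves $\gamma''$ (hence at most $|V_2^+(r)|=\Theta(r^{2/3})$ choices) free. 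To repair your write-up you would need to replace the phase-based product with the interleaved, coordinate-sharing construction and redo the uniqueness and intersection arguments along these convexity lines.
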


\subsection{Organization}
In \cref{sec:technical_overview} we provide a technical overview. In \cref{sec:prelim} we recall preliminary knowledge that will be used in the following sections. In \cref{sec:result} we present our improved alternation products. In \cref{sec:app_shortcuts}, we prove our lower bound for shortcut sets (\cref{thm:shortcut_LB}). In \cref{sec:spanner_emulator} we prove our lower bounds for spanners and emulators (\cref{thm:spanner_LB},\cref{thm:emulator_LB}). In \cref{sec:conclusion}, we conclude by showing that further generalizations of our alternation products do not yield better results for our applications.

\section{Technical Overview}\label{sec:technical_overview}

Recall that the goal of lower bound constructions for additive spanners, additive emulators, and shortcut sets is to obtain a graph with the properties (P1)-(P4) above. Further recall that applying an alternation product to a base graph boosts property (P1) (number of critical pairs) at the expense of property (P4) (length of critical paths).

The core contribution of this paper is an improved alternation product. We begin by describing the standard alternation product.

\paragraph{The Standard Alternation Product.} We will describe the graph $G_{alt}$ obtained by taking a standard alternation product of two copies of the base graph $G_0$ \cite{Hesse03,AB,abboud2018hierarchy,Huang2018LowerBO}. In $G_{alt}$, every vertex represents a \emph{pair} of vertices in the base graph $G_0$. The critical pairs of $G_{alt}$ have the property that if each of $(x,y)$ and $(x',y')$ is a critical pair in $G_0$, then each of $(x,x')$ and $(y,y')$ is a vertex in $G_{alt}$ and $((x,x'),(y,y'))$ is a critical pair in $G_{alt}$. 

$G_{alt}$ is a layered graph whose only edges are between adjacent layers. For a vertex $(x,y)$ in layer $i$ of $G_{alt}$, the definition of the edges incident to $(x,y)$ is conditioned on the parity of $i$. If $i$ is even then $G_{alt}$ contains an edge from $(x,y)$ to each vertex in layer $i+1$ of the form $(z,y)$ such that $(x,z)$ is an edge in $G_0$. If $i$ is odd then $G_{alt}$ contains an edge from $(x,y)$ to each vertex in layer $i+1$ of the form $(x,z)$ such that $(y,z)$ is an edge in $G_0$. 

Traditionally, each vertex in the base graph $G_0$ represents a lattice point in 2-dimensional space\footnote{\cite{Huang2018LowerBO} also obtain a construction with equivalent guarantees in 3-dimensional space.}. That is, each vertex of $G_0$ is of the form $x=(x_1,x_2)$ where $x_1$ and $x_2$ are the two coordinates of the corresponding lattice point. Note that with this lattice point notation, each vertex of $G_{alt}$ is of the form $((x_1,x_2),(y_1,y_2))$. That is, each vertex in $G_{alt}$ has 4 dimensions.

\paragraph{Our First Alternation Product Graph $G_{alt}^2$.} Recall that the goal of our improved alternation products is to produce graphs with longer critical paths than the standard alternation product. The key idea of our first improved alternation product, notated $G_{alt}^2$ is to \emph{decrease the number of dimensions} of each vertex in $G_{alt}$ from 4 to 3 by packing more information into each dimension. With this modification, we pack more information into each vertex, allowing there to be fewer vertices per layer, and thus more layers and longer critical paths.

In particular, each vertex in $G_{alt}^2$ is represented by a triple $(x_1,x_2,x_3)$, and edges are defined as follows: \begin{itemize}\item For even $i$, there is an edge from the vertex $(x_1,x_2,x_3)$ in layer $i$ to the vertex $(y_1,y_2,x_3)$ in layer $i+1$ if there is an edge from the vertex $(x_1,x_2)$ to the vertex $(y_1,y_2)$ in $G_0$. \item For odd $i$, there is an edge from $(x_1,x_2,x_3)$ in layer $i$ to $(x_1,y_2,y_3)$ in layer $i+1$ if there is an edge from the vertex $(x_2,x_3)$ to the vertex $(y_2,y_3)$ in $G_0$. \end{itemize} That is, the even steps affect coordinates 1 and 2, and the odd steps affect coordinates 2 and 3. (Recall that in the standard alternation product, the even steps affected coordinates 1 and 2, and the odd steps affected coordinates 3 and 4.) 

We will show that $G_{alt}^2$ can be applied to get lower bounds for shortcut sets, additive spanners, and additive emulators that improve over prior work.  Next, we describe our second alternation product graph $G_{alt}^3$, which we use to further improve the lower bound for shortcut sets.

\paragraph{Our Second Alternation Product Graph $G_{alt}^3$.} For our second alternation product, notated $G_{alt}^3$ we take the idea of packing more information into each dimension one step further. Instead of considering an alternation product of two copies of $G_0$, we consider an alternation product of \emph{three} copies of $G_0$ (hence the name $G_{alt}^3$). If we were to apply a standard alternation product on three copies of $G_0$, each vertex in $G_{alt}$ would be represented by 6 coordinates (each vertex in $G_{alt}$ represents 3 vertices in $G_0$ each of which has 2 coordinates). In this standard alternation product, edges from layer $i$ to layer $i+1$ would affect different coordinates depending on the value of $i\mod 3$. If $i\equiv 0 \pmod 3$ then the edges from layer $i$ to layer $i+1$ would affect coordinates 1 and 2, if $i\equiv 1\pmod 3$ they would affect coordinates 3 and 4, and if $i\equiv 2\pmod 3$ they would affect coordinates 5 and 6. 

To construct our improved alternation product $G_{alt}^3$ we decrease the number of dimensions from 6 to 4. In particular, each vertex in $G_{alt}^3$ is represented by a quadruple $(x_1,x_2,x_3,x_4)$, and steps with $i\equiv 0\pmod 3$ affect coordinates 1 and 2, steps with $i\equiv 1\pmod 3$ affect coordinates 2 and 3, and steps with $i\equiv 2\pmod 3$ affect coordinates 3 and 4. Again, this packs more information into each vertex, allowing there to be fewer vertices per layer, and thus more layers and longer critical paths. 

As previously stated, the construction of $G_{alt}^3$ relaxes property (P3) (near-disjointness of critical paths): each pair of critical paths in $G_{alt}^3$ intersects on at most a path of length 2. Coping with this increased intersection requires some extra care in the analysis for our application to shortcut sets. 

Unfortunately, the applications to spanners and emulators do not allow critical paths to overlap on a path of length 2, so we do not use $G_{alt}^3$ for these applications. 

\paragraph{Limitations on Further Generalizations.}
We note that for both of our new alternation products $G_{alt}^2$ and $G_{alt}^3$, one could consider packing even more information into even fewer coordinates. The problem with this is that when too much information is packed into too few coordinates, the critical paths lose the important property that they are unique.

Lastly, one can imagine generalizing our ideas further, defining $G_{alt}^k$ to be an alternation product of $k$ copies of $G_0$ where each vertex in $G_{alt}^k$ is represented by $k+1$ coordinates. Then, for $i\equiv j\pmod k$, edges from layer $i$ to layer $i+1$ with would affect coordinates $j+1$ and $j+2$. Such a construction would produce a graph where each pair of critical paths intersects at a path of length at most $k-1$. This amount of path overlap is not an inherent problem for the application to shortcut sets, however as it turns out, the bound on shortcut sets that $G_{alt}^k$ would produce is not better than the bound produced by $G_{alt}^3$. In particular, if one plots the bounds obtained by using $G_{alt}^k$ for all positive integer values of $k$, there is a global maximum at $k=3$. This is why we use $G_{alt}^3$ and do not generalize our construction further.

\section{Preliminaries}\label{sec:prelim}

In this section, we state some relevant definitions and known facts that will be useful in proving our lower bound result. Let $G = (V,E)$ be a graph that is weighted or unweighted, directed or undirected. We use $n$ to denote the number of vertices $|V|$ and $m$ to denote the number of edges $|E|$. In this paper, all graphs given as inputs to our problems are unweighted. We assume all graphs considered in the shortcut set problem are directed and all graphs considered in the spanner and emulator problems are undirected. In addition, we note that an emulator construction can be weighted.

Now we recall some basic information on polytopes and convexity that is crucial to our constructions.

\subsection{Polytopes and Convexity}\label{subsec:polytope}

For a set $\{v_1,\dots, v_n\}\in \R^d$ of vectors, $\sum_{i = 1}^n a_iv_i$ is called a \emph{convex combination} of $\{v_1,\ldots, v_n\}$ if $a_i\geq 0 \in \R$ for all $1\leq i\leq n$ and $\sum_{i = n}a_i = 1$. Then the \emph{convex hull} of the vectors $\{v_1,\ldots, v_n\}$ is the set of convex combinations of these vectors. We note that in this paper we set $d = 2$, but for the sake of generality, we present our constructions with $d$ as a variable.

A \emph{polytope} is the convex hull of some set of vectors in $\R^d$. Equivalently, a polytope can also be defined as the set of feasible solutions of $n$ linear inequalities for $n > d$. For a polytope $P$, the \emph{vertices} of $P$ is the set of points in $P$ such that $d$ inequalities hold as equalities and the \emph{edges} of $P$ is the set of points in $P$ such that $d-1$ inequalities hold as equalities.

In this paper, we use $B_d(r)$ to denote the set of all lattice points within Euclidean distance $r$ of the origin and $V_d(r)$ denotes the vertices of $B_d(r)$. That is, $V_d(r)$ is the set of all lattice points at the corners of the convex hull of $B_d(r)$. The set $V_d(r)$ is also referred as the vertex hull of the polytope defined by $B_d(r)$. We view the elements in $V_d(r)$ as vectors in $\Z^d$ and from now on we will refer to them as vectors instead of vertices in order to distinguish from vertices of graphs.  

Barany and Larman \cite{BL98} proved the following bound on the size of $V_d(r)$.

\begin{lemma}[\cite{BL98}]\label{lem:polytope_vertex_num}
For any constant dimension $d$, $|V_d(r)| =  \Theta\left(r^{d\frac{d-1}{d+1}}\right)$. 
\end{lemma}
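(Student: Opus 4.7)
The plan is to prove the two inequalities separately; this is the classical Andrews-type bound on lattice points in convex bodies, with the upper bound originally due to Andrews and the matching lower bound in general dimension due to B\'ar\'any and Larman (the two-dimensional case goes back to Jarn\'ik). Throughout, $d$ is treated as a fixed constant and the $\Theta(\cdot)$ hides constants depending only on $d$.

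For the upper bound $|V_d(r)| = O(r^{d(d-1)/(d+1)})$, I would run Andrews' cap/volume argument. Let $P = \operatorname{conv}(B_d(r) \cap \Z^d)$, with hull vertices $v_1,\dots,v_N$. For each $v_i$ fix a supporting hyperplane $H_i$ of $P$ at $v_i$, and shift $H_i$ inward by a common distance $\epsilon > 0$ to obtain $H_i^\epsilon$; let $C_i \subset B_d(r)$ be the spherical cap on the outer side of $H_i^\epsilon$. Two ingredients combine: (i) each $C_i$ contains a full-dimensional lattice simplex having $v_i$ as an apex (obtained by taking $d$ lattice neighbours of $v_i$ in linearly independent directions), so $\operatorname{vol}(C_i) \geq 1/d!$; (ii) any cap of a ball of radius $r$ cut at depth at most $\epsilon$ has volume $O(r^{(d-1)/2}\epsilon^{(d+1)/2})$. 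Since the caps $C_i$ are cut off by supporting hyperplanes of a convex body, any point of $B_d(r)$ lies in only $O(1)$ of them, so $\sum_i \operatorname{vol}(C_i) = O(\epsilon\, r^{d-1})$, the volume of the $\epsilon$-shell of $B_d(r)$. Choosing the unique $\epsilon$ that equalises the cap lower bound $1/d!$ with the ball-cap estimate in (ii) gives $\epsilon = \Theta(r^{-(d-1)/(d+1)})$, and then dividing the shell volume by the per-cap lower bound yields $N = O(r^{d(d-1)/(d+1)})$.

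For the matching lower bound $|V_d(r)| = \Omega(r^{d(d-1)/(d+1)})$ I would produce the vertices explicitly. For every primitive direction $u$ with $\|u\|$ not too large, the lattice point of $B_d(r) \cap \Z^d$ maximising $\langle u, \cdot\rangle$ is a vertex of the hull, and two sufficiently non-parallel $u$'s select distinct maximisers. Equivalently, partition the sphere $\partial B_d(r)$ into spherical patches of angular width $\Theta(r^{-2/(d+1)})$: an angular gap of this order is needed to force the extreme lattice point to move, and there are $\Theta(r^{d(d-1)/(d+1)})$ such patches on the $(d-1)$-sphere of area $\Theta(r^{d-1})$. A direct counting lemma (essentially counting primitive integer vectors of norm at most $\Theta(r^{1/(d+1)})$) shows each patch contributes a new hull vertex, completing the bound.

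The hard part is the upper bound: quantifying the bounded-overlap property of the caps and choosing $\epsilon$ so that cap-volume lower bound and shell-volume upper bound are both tight. Since the statement is used only as a black box in the rest of the paper, for our purposes it suffices to invoke \cite{BL98}; the sketch above is the route I would take if a self-contained argument were required.
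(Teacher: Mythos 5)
The paper never proves this lemma; it is imported verbatim from B\'ar\'any--Larman (\cite{BL98}) and used as a black box (and only for $d=2$), so your closing remark that it suffices to invoke \cite{BL98} is exactly what the paper does. The sketch you give, however, would not survive as an actual proof. For the upper bound, the pivotal step ``any point of $B_d(r)$ lies in only $O(1)$ of the caps $C_i$'' is a non sequitur: supporting hyperplanes at distinct hull vertices that are close to one another are nearly parallel, and a cap of depth $\epsilon+O(1)$ has tangential extent about $\sqrt{r(\epsilon+1)}$, so a single point of the shell can lie in as many caps as there are hull vertices within that radius --- which is precisely the quantity you are trying to bound, so the overlap claim is essentially circular. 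The $\epsilon$-accounting is also inconsistent: a supporting hyperplane $H_i$ of the lattice hull can sit a constant distance inside the sphere, so $C_i$ is a cap of depth $\epsilon+O(1)$, is \emph{not} contained in the $\epsilon$-shell, and your estimate (ii) does not apply at depth $\epsilon$. Finally, once you optimize to $\epsilon=\Theta(r^{-(d-1)/(d+1)})=o(1)$, claim (i) collapses: no lattice point of the ball lies strictly beyond $H_i$, so every lattice point of $C_i$ sits in a slab of width $\epsilon<1$ below $H_i$, and the ``$d$ lattice neighbours of $v_i$'' are in general not in $C_i$; taking $\epsilon$ a large constant to rescue (i) confines the caps only to an $O(1)$-width shell and yields just the trivial $N=O(r^{d-1})$. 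This is why Andrews' theorem needs genuinely heavier machinery (Andrews' original argument, Schmidt, or the economic cap coverings of \cite{BL98}).

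The lower-bound sketch is calibrated to $d=2$ and the exponents do not match for $d\ge 3$: patches of angular width $\Theta(r^{-2/(d+1)})$ on the sphere number $\Theta(r^{2(d-1)/(d+1)})$, not $\Theta(r^{d(d-1)/(d+1)})$ (the two agree only at $d=2$); the caps relevant to the true bound have depth $\Theta(r^{-(d-1)/(d+1)})$, hence footprint diameter $\Theta(r^{1/(d+1)})$, i.e.\ angular width $\Theta(r^{-d/(d+1)})$. More fundamentally, a directional-maximizer argument cannot reach the full count for $d\ge 3$: there are only $\Theta(r^{d/(d+1)})$ primitive directions of norm $O(r^{1/(d+1)})$, and most hull vertices have tiny normal cones, so any net of test directions sees only a small fraction of them; the actual \cite{BL98} lower bound proceeds via packings of disjoint caps rather than extreme points in a net of directions, and showing each such cap forces a vertex is itself nontrivial. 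In short, your plan is the Jarn\'ik picture for $d=2$ over-generalized; since the paper only needs $d=2$ and cites the result, deferring to \cite{BL98} is the right move, but the sketch as written should not be presented as a proof for general $d$.
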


Let $V_d^+(r)$ denote the set of lattice points $(v_1,\dots, v_d)\in  V_d(r)$ with $v_1,\dots, v_d\geq 0$. We note that the bound from \cref{lem:polytope_vertex_num} also holds for $V_d^+(r)$ since the $|V_d^+(r)|$ is a constant fraction of $|V_r(d)|$ when $d$ is constant. 

\section{Improved Alternation Product}\label{sec:result}

In this section, we will be presenting our improved definition of the alternation product, which proves \cref{thm:Galt2} and \cref{thm:Galt3}.

\subsection{The Base Graph $G_0$}\label{subsec:base_graph}
We construct a directed unweighted base graph $G_0$, which is essentially the same as the base graph used in \cite{Huang2018LowerBO}, although the technical definitions differ.

\paragraph{Construction of $\mathbf{G_0}$} Given $D,r,d > 0\in \Z$, we construct the base graph $G_0 = (V_0, E_0)$ as follows. (Note that throughout the paper we use the construction for $d = 2$, but for the sake of generality, we present the construction with $d$ as a variable.)
\begin{itemize}
    \item[--] \textbf{Vertex Set $V_0$}: $G_0$ consists of $D+1$ layers with each layer having $(3Dr)^d$ vertices representing all the unique elements of $(\Z/(3Dr)\Z)^d$. We label the vertex representing $x\in (\Z/(3Dr)\Z)^d$ in layer $i$ as $v_x^i$, for $0\leq i \leq D$.
    \item[--] \textbf{Edge Set $E_0$}: We only put edges between adjacent layers. In particular, for all $i=0,\dots,D-1$, we put an edge from $v_x^i$ to $v_y^{i+1}$ if and only if there exists a vector $\gamma \in V_d^+(r)$ such that $x+\gamma\equiv y$ in $(\Z/(3Dr)\Z)^d$. 
    \item[--] \textbf{Critical Pairs $P_0$}: The critical pairs are of the form $(v_x^0, v_{x+D\gamma}^D)$ for all $x\in (\Z/(3Dr)\Z)^d$ and $\gamma\in V_d^+(r)$ and the \emph{critical path} between such a critical pair consists of the vertices $v_x^0, v_{x+\gamma}^1, v_{x+2\gamma}^2,\dots, v_{x+D\gamma}^D$. 
\end{itemize}

\paragraph{Properties of the Base Graph $\mathbf{G_0}$.} 
Note that the critical paths have length $D$, which is also the diameter of $G_0$. We will show that the critical paths between critical pairs are in fact unique and pairwise edge-disjoint.

\begin{lemma}\label{lem:G_0_unique_path}
Let $(v_x^0, v_{x+D\gamma}^D)$ be a critical pair in $G_0$ for some $x\in (\Z/(3Dr)\Z)^d$ and $\gamma\in V_d^+(r)$, then the critical path consisting of the vertices $v_x^0, v_{x+\gamma}^1, v_{x+2\gamma}^2,\dots, v_{x+D\gamma}^D$ is the unique path between $v_x^0$ and $v_{x+D\gamma}^D$.
\end{lemma}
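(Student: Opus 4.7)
The plan is to parametrize paths by the sequence of step-vectors they use and then exploit extremality of $\gamma$ in the convex hull of $B_d(r)$.

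First I would observe that $G_0$ is layered with edges only between consecutive layers, so every path from $v_x^0$ to $v_{x+D\gamma}^D$ has exactly $D$ edges. By the definition of $E_0$, such a path is in bijection with a sequence $(\gamma_1,\dots,\gamma_D)$ of vectors from $V_d^+(r)$ such that
\[
\gamma_1 + \gamma_2 + \cdots + \gamma_D \;\equiv\; D\gamma \pmod{3Dr},
\]
where the path visits $v^i_{x+\gamma_1+\cdots+\gamma_i}$ at layer $i$. The critical path itself is the one with $\gamma_i=\gamma$ for all $i$, so uniqueness amounts to showing this is the only such sequence.

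Next, I would remove the modular reduction. Since each $\gamma_i\in V_d^+(r)\subseteq B_d(r)$ has nonnegative coordinates bounded by its Euclidean norm $r$, every coordinate of $\gamma_1+\cdots+\gamma_D$ lies in $[0,Dr]$, and the same holds for $D\gamma$. Both lie in $[0,Dr]^d\subsetneq [0,3Dr)^d$, so the congruence modulo $3Dr$ actually lifts to an equality in $\Z^d$:
\[
\gamma_1+\cdots+\gamma_D \;=\; D\gamma.
\]

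The heart of the argument is then a convexity step. Rewriting gives
\[
\gamma \;=\; \frac{1}{D}\sum_{i=1}^D \gamma_i,
\]
which exhibits $\gamma$ as a convex combination of points in $\mathrm{conv}(B_d(r))$ (each $\gamma_i$ lies in $B_d(r)$). But $\gamma\in V_d^+(r)\subseteq V_d(r)$ is by definition a vertex of $\mathrm{conv}(B_d(r))$, hence an extreme point of this convex set. An extreme point admits no nontrivial representation as a convex combination of points of the set, so we must have $\gamma_i=\gamma$ for every $i$, recovering exactly the critical path.

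I do not expect a real obstacle here; the only point that requires any care is the elementary bound that lets us pass from a congruence mod $3Dr$ to equality in $\Z^d$, which is precisely why the construction uses the modulus $3Dr$ (strictly larger than the maximum coordinate sum $Dr$) rather than a smaller one. Once that is in place, extremality of $\gamma$ in $\mathrm{conv}(B_d(r))$ immediately forces every step of the path to equal $\gamma$.
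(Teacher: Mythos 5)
Your proof is correct and follows essentially the same route as the paper's: encode any alternative path by its step vectors in $V_d^+(r)$, use the coordinate bound $Dr < 3Dr$ to lift the congruence $\sum_i \gamma_i \equiv D\gamma$ to an equality in $\Z^d$, and then invoke convexity to force $\gamma_i = \gamma$ for all $i$. Your phrasing of the last step via extremality of $\gamma$ as a vertex of the convex hull is just a cleaner statement of what the paper calls the strict convexity of $V_d^+(r)$.
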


\begin{proof}
Recall that the edges in $G_0$ are between adjacent layers in the form of $(v_x^i, v_{x+\gamma}^{i+1})$ for some $x\in (\Z/(3Dr)\Z)^d$ and $\gamma\in V_d(r)$. Suppose for contradiction that there exists another path $p'$ different from the critical path between $(v_x^0, v_{x+D\gamma}^D)$ consisting of the vertices $v_x^0, v_{x+\gamma_1}^1, v_{x+\gamma_1+\gamma_2}^2,\dots, v_{x+\sum_{i = 1}^D\gamma_i}^D$ for $\gamma_1,\gamma_2,\dots, \gamma_D\in V_d^+(r)$. Then we must have
\[D\cdot \gamma \equiv \sum_{i = 1}^D \gamma_i\]
in the group $(\Z/(3Dr)\Z)^d$. However, since $\gamma_i\in V_d(r)$ for all $1\leq i\leq D$, we have $||\gamma_i||\leq r$ and therefore we must also have $D\cdot \gamma = \sum_{i = 1}^D \gamma_i$ in $\Z^d$ instead of just in $(\Z/(3Dr)\Z)^d$. Specifically, this implies that $\gamma$ is a non-trivial convex combination of the vectors in $V_d^+(r)$. Since $V_d^+(r)$ is a strictly convex set, the only solution to this equation is when $\gamma_1 = \gamma_2 = \dots = \gamma_D = \gamma$, which is exactly the critical path.
\end{proof}

From Lemma~\ref{lem:G_0_unique_path}, we have the following corollary:

\begin{corollary}\label{cor:G_0_properties}
The graph $G_0$ has the following properties:
\begin{enumerate}
    \item Each edge $(u,v)\in E_0$ is used by exactly one critical path.
    \item The edge set $E_0$ is exactly the union of the critical paths.
    \item Any two intersecting critical paths can only intersect at one vertex.
\end{enumerate}
\end{corollary}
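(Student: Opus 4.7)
The plan is to derive all three properties directly from the structural fact already established in \cref{lem:G_0_unique_path}, namely that a critical path is completely determined by its starting vertex and its direction vector $\gamma\in V_d^+(r)$, together with the size bound $\|\gamma\|\le r\ll 3Dr$ that makes elements of $V_d^+(r)$ distinguishable modulo $3Dr$.

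For property 1, I would fix an arbitrary edge $(v_x^i,v_{y}^{i+1})\in E_0$. By the definition of $E_0$ there exists $\gamma\in V_d^+(r)$ with $y\equiv x+\gamma$ in $(\Z/(3Dr)\Z)^d$. Any other $\gamma'\in V_d^+(r)$ representing this edge would satisfy $\gamma'\equiv\gamma\pmod{3Dr}$ coordinatewise; since $\|\gamma\|,\|\gamma'\|\le r<3Dr/2$, this forces $\gamma'=\gamma$ in $\Z^d$. Thus the direction of any critical path using the edge is forced, and its starting vertex must then be $v^0_{x-i\gamma}$, giving exactly one such critical path. Property 2 is an immediate consequence: any edge $(v^i_x,v^{i+1}_{x+\gamma})$ lies on the critical path starting at $v^0_{x-i\gamma}$ with direction $\gamma$, so $E_0$ is contained in the union of the critical paths, and the reverse inclusion is trivial.

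For property 3, I would suppose two critical paths starting at $v^0_{z_1}, v^0_{z_2}$ with directions $\gamma_1,\gamma_2\in V_d^+(r)$ share two distinct vertices $v^i_x$ and $v^j_y$ with $i<j$. Then $z_1+i\gamma_1\equiv z_2+i\gamma_2$ and $z_1+j\gamma_1\equiv z_2+j\gamma_2$ in $(\Z/(3Dr)\Z)^d$. Subtracting yields $(j-i)(\gamma_1-\gamma_2)\equiv 0\pmod{3Dr}$ coordinatewise. Since $1\le j-i\le D$ and $\|\gamma_1\|,\|\gamma_2\|\le r$, each coordinate of $(j-i)(\gamma_1-\gamma_2)$ has absolute value at most $2Dr<3Dr$, so the congruence forces $(j-i)(\gamma_1-\gamma_2)=0$ in $\Z^d$, hence $\gamma_1=\gamma_2$. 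Plugging back gives $z_1=z_2$, making the two paths identical. Thus distinct critical paths share at most one vertex, which simultaneously rules out sharing any edge and is consistent with property 1.

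The only non-routine step is the argument for property 3, and even that is essentially a reuse of the mod-$3Dr$ size trick from \cref{lem:G_0_unique_path}; the main thing to be careful about is verifying the coordinatewise norm bound $(j-i)\|\gamma_1-\gamma_2\|_\infty\le 2Dr<3Dr$ so that the congruence can be lifted to an equality in $\Z^d$. Once this is in place, strict convexity is not even needed for property 3, since $(j-i)(\gamma_1-\gamma_2)=0$ already gives $\gamma_1=\gamma_2$ directly.
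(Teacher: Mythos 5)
Your proposal is correct. The paper itself gives no explicit proof, stating the corollary as an immediate consequence of \cref{lem:G_0_unique_path}, and your argument supplies exactly the intended details: critical paths are parameterized by a start vertex in layer $0$ and a direction $\gamma\in V_d^+(r)$, and the ``small norm versus modulus $3Dr$'' lifting trick pins these parameters down from a single edge (property 1, hence 2) or from two shared vertices (property 3). Your route for property 3 is in fact slightly more elementary than deriving it through the uniqueness lemma: the congruence $(j-i)(\gamma_1-\gamma_2)\equiv 0 \pmod{3Dr}$ together with the coordinatewise bound (which is even $\le Dr$, not just $2Dr$) forces $\gamma_1=\gamma_2$ and then $z_1=z_2$ directly, with no appeal to strict convexity — convexity is only needed for the uniqueness statement of \cref{lem:G_0_unique_path} itself. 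Two trivial points worth making explicit: two distinct shared vertices must lie in different layers (each critical path meets every layer exactly once), which justifies taking $i<j$; and distinct parameter pairs $(z,\gamma)$ yield distinct critical pairs (again by the same lifting argument applied to $D(\gamma-\gamma')$), so ``exactly one critical path'' in property 1 is unambiguous.
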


Using \cref{lem:polytope_vertex_num}, the numbers of vertices, edges and critical pairs in $G_0$ are: we obtain the following quantities for $G_0$:
\begin{align*}
    |V_0| &= D\cdot (3Dr)^d = O(D\cdot (Dr)^d)\\
    |E_0| &= O(|V_0|\cdot |V_d^+(r)|) = O(D\cdot (Dr)^d|V_d(r)|) = O\left(D\cdot(Dr)^d\cdot r^{d\frac{d-1}{d+1}}\right)\\
    |P_0| &= O((Dr)^d |V_d^+(r)|) = O\left((Dr)^d \cdot r^{d\frac{d-1}{d+1}}\right)
\end{align*}

For the rest of the paper, we will only use $d = 2$.

\subsection{The Alternation Product Graph $G_{alt}^2$}\label{subsec:G_alt_2}
In this section, we define our first new alternation product, which produces the graph $G_{alt}^2$ when applied to two copies of the base graph $G_0$. We prove \cref{thm:Galt2} which details the properties of $G_{alt}^2$. First we recall \cref{thm:Galt2}:
\thmGalttwo*

\paragraph{Construction of $\mathbf{\Galt^2}$}
We construct the alternation product graph $G_{alt}^2 = (V_{alt}^2, E_{alt}^2)$ on two copies of the base graph $G_0$ with $d = 2$. $G_{alt}^2$ and its set $P_{alt}^2$ of critical pairs are defined as follows:
\begin{itemize}
    \item[--] \textbf{Vertex Set $V_{alt}^2$}: The graph $G_{alt}^2$ consists of $2D+1$ layers with each layer having $(3Dr)^3$ vertices representing all the unique elements in $(\Z/(3Dr)\Z)^3$.
    \item[--] \textbf{Edge Set $E_{alt}^2$}: The edges of $G_{alt}^2$ are defined as follows:
        \begin{itemize}
            \item If $i$ is even, we put an edge from $v_{(x_1,x_2,x_3)}^i$ to $v_{(y_1,y_2,y_3)}^{i+1}$ if and only if there exists an edge from $v_{(x_1,x_2)}^{i/2}$ to $v_{(y_1,y_2)}^{(i/2)+1}$ in $G_0$, i.e. if and only if there exists a vector $\gamma\in V_2^+(r)$ such that $(x_1,x_2) + \gamma\equiv (y_1,y_2)$ in $(\Z/(3Dr)\Z)^d$. 
            \item If $i$ is odd, we put an edge between $v_{(x_1,x_2,x_3)}^i$ and $v_{(y_1,y_2,y_3)}^{i+1}$ if and only if there exists an edge from $v_{(x_2,x_3)}^{\floor{i/2}}$ to $v_{(y_2,y_3)}^{\floor{i/2}+1}$ in $G_0$.
        \end{itemize}
        We remark that since the edges between adjacent layers in $G_0$ are identical, the superscripts on the vertices in $G_0$ in the above definitions do not actually make a difference. This notation is only for illustrating how the odd layers corresponds to one copy of the base graph and the even layers corresponds to the other copy.  
    \item[--] \textbf{Critical Pairs $P_{alt}^2$}: For all $x = (x_1,x_2,x_3)\in (\Z/(3Dr)\Z)^3$ and $\gamma = (\gamma_1, \gamma_2), \gamma' = (\gamma_1',\gamma_2')\in V_2^+(r)$, there is a critical pair in $P_{alt}$ of the form 
    \[(v_{(x_1, x_2,x_3)}^0, v_{(x_1+D\gamma_1, x_2+D\gamma_2+D\gamma_1',x_3+D\gamma_2')}^{2D}).\]
    Note that a critical pair can be uniquely identified by an element $x\in (\Z/(3Dr)\Z)^3$ and a pair $(\gamma, \gamma')\in (V_2^+(r))^2$. 
\end{itemize}

\paragraph{Properties of $\mathbf{\Galt^2}$.} In the following, we prove the properties and bounds of $\Galt^2$ stated in \cref{thm:Galt2} by a sequence of claims.

\begin{claim}\label{claim:Galt2_unique_path}
 For every critical pair $(v_x^0, v_y^{2D})\in P_{alt}^2$, there is a unique path from $v_x^0$ to $v_y^{2D}$ in $G_{alt}^2$. 
\end{claim}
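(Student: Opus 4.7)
The plan is to parameterize any length-$2D$ path from $v_x^0$ to $v_y^{2D}$ by the sequence of displacement vectors it uses, pin down the multiset of these vectors via a convex-geometric bound, and then invoke the strict convexity argument from \cref{lem:G_0_unique_path} to force the path to coincide with the alternating critical path. Throughout, write the critical pair as $(v_{(x_1,x_2,x_3)}^0,v_{(x_1+D\gamma_1,\,x_2+D\gamma_2+D\gamma_1',\,x_3+D\gamma_2')}^{2D})$ with $\gamma=(\gamma_1,\gamma_2),\gamma'=(\gamma_1',\gamma_2')\in V_2^+(r)$.

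Since $\Galt^2$ is layered with edges only between consecutive layers, every path from layer $0$ to layer $2D$ has length exactly $2D$, so it is described by $D$ vectors $\alpha_1,\dots,\alpha_D\in V_2^+(r)$ applied on the even-indexed steps (changing coordinates $1,2$) and $D$ vectors $\beta_1,\dots,\beta_D\in V_2^+(r)$ applied on the odd-indexed steps (changing coordinates $2,3$). Writing $A=\sum_i\alpha_i$ and $B=\sum_i\beta_i$ in $\Z^2$, the endpoint condition becomes
\[A_1\equiv D\gamma_1,\qquad A_2+B_1\equiv D\gamma_2+D\gamma_1',\qquad B_2\equiv D\gamma_2'\pmod{3Dr}.\]
Because every coordinate of $\alpha_i,\beta_i$ lies in $[0,r]$, each of $A_1,A_2,B_1,B_2$ lies in $[0,Dr]$, so all three left-hand sides are strictly less than $3Dr$ and the congruences are honest equalities in $\Z$.

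The main step is to upgrade the mixed equation $A_2+B_1=D(\gamma_2+\gamma_1')$ into the two separate equalities $A_2=D\gamma_2$ and $B_1=D\gamma_1'$. I would use that $A/D$ and $B/D$ lie in the convex hull of $V_2^+(r)$, whose boundary consists of the chord from $(0,r)$ to $(r,0)$ on the inside and a concave arc on the outside, with \emph{every} element of $V_2^+(r)$ lying on that arc. Since $(\gamma_1,\gamma_2)\in V_2^+(r)$ sits on the arc, any point of the hull with first coordinate $\gamma_1$ has second coordinate at most $\gamma_2$; applied to $A/D=(\gamma_1,A_2/D)$ this gives $A_2\le D\gamma_2$. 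Symmetrically, any point of the hull with second coordinate $\gamma_2'$ has first coordinate at most $\gamma_1'$, so $B_1\le D\gamma_1'$. Summing and comparing to the exact equality $A_2+B_1=D(\gamma_2+\gamma_1')$ forces both inequalities to be tight, yielding $A=D\gamma$ and $B=D\gamma'$.

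Once $\sum_i\alpha_i=D\gamma$ with $\gamma\in V_2^+(r)$, the same strict-convexity reasoning as in \cref{lem:G_0_unique_path} (the extreme point $\gamma$ of $V_2^+(r)$ admits only trivial representations as a convex combination of elements of $V_2^+(r)$) forces $\alpha_1=\cdots=\alpha_D=\gamma$, and analogously $\beta_1=\cdots=\beta_D=\gamma'$. Hence the path is the unique one alternating $\gamma$ on even steps and $\gamma'$ on odd steps. The main obstacle is the convex-geometric step: a priori the endpoint only imposes two coordinate equalities plus one mixed one, and it is precisely the concavity of the arc, together with the fact that all of $V_2^+(r)$ lives on it, that converts the mixed equation into a tight pair of coordinate equalities.
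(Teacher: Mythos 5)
Your proposal is correct and follows essentially the same route as the paper's proof: the same decomposition into even-step and odd-step displacement vectors, the same three endpoint equations lifted from $\Z/(3Dr)\Z$ to $\Z$, the same convexity argument (your ``hull point with first coordinate $\gamma_1$ has second coordinate at most $\gamma_2$'' is exactly the paper's concave-function/Jensen step) to split the mixed equation into $A_2=D\gamma_2$ and $B_1=D\gamma_1'$, and the same strict-convexity conclusion forcing all step vectors to equal $\gamma$ or $\gamma'$.
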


\begin{proof}
Let $v_x^0 = v_{x^0}^0, v_{x^1}^1,\dots, v_{x^{2D-1}}^{2D-1}, v_{x^{2D}}^{2D} = v_{y}^{2D}$ be the vertices on the critical path from $v_x^0$ to $v_y^{2D}$, where $x^i = (x_1^i, x_2^i, x_3^i)$ for $0\leq i\leq 2D$. By construction of the critical pairs, there exist $\gamma = (\gamma_1,\gamma_2),\gamma' = (\gamma_1', \gamma_2')\in V_2^+(r)$ such that the critical path satisfies that for all $i$:
\begin{itemize}
    \item[--] if $i$ is even, then $(x_1^i,x_2^i) + \gamma\equiv (x_1^{i+1},x_2^{i+1}) \pmod{3Dr}$;
    \item[--] if $i$ is odd, then $(x_2^i,x_3^i) + \gamma'\equiv (x_2^{i+1},x_3^{i+1}) \pmod{3Dr}$.
\end{itemize}
Now we show that this path is unique. Suppose there exists vectors $a^1,\dots, a^{D-1}$ and $b^1,\dots,b^{D-1}\in V_2^+(r)$ such that there is another path from $v_x^0$ to $v_y^{2D}$ that goes through the vertices $v_x^0 = v_{z^0}^0, v_{z^1}^1,\dots, v_{z^{2D-1}}^{2D-1}, v_{z^{2D}}^{2D} = v_{y}^{2D}$ where $z^i = (z_1^i, z_2^i, z_3^i)$ for $0\leq i\leq 2D$ such that for all $i$
\begin{itemize}
    \item[--] if $i$ is even, then $(z_1^i,z_2^i) + a^{i/2}\equiv (z_1^{i+1},z_2^{i+1}) \pmod{3Dr}$;
    \item[--] if $i$ is odd, then $(z_2^i,z_3^i) + b^{\floor{i/2}}\equiv (z_2^{i+1},z_3^{i+1}) \pmod{3Dr}$.
\end{itemize}
 Then since this is a path from $v_x^0$ to $v_y^{2D}$, we must have 

\begin{align}
    D\cdot \gamma_1 &= \sum_{k = 0}^{D - 1} a_1^k,\label{eq1}\\
    D\cdot \gamma_2 + D\cdot \gamma'_1 &= \sum_{k = 0}^{D - 1} a_2^k + \sum_{k = 0}^{D - 1} b_1^k,\label{eq2}\\
    D\cdot \gamma'_2 &= \sum_{k = 0}^{D - 1} b_2^k.\label{eq3}
\end{align}
Note that \cref{eq1}, \cref{eq2} and \cref{eq3} along with the following equations hold in both $\Z/(3Dr)\Z$ and in $\Z$ since the quantities on both sides are less than $3Dr$.

For the vectors $\alpha = (\alpha_1,\alpha_2)$ in $V_2^+(r)$, since $V_2^+(r)$ is a strictly convex set, notice that if we consider $\alpha_2$ as a function of $\alpha_1$, then it is a concave function on the points in $V_2^+(r)$; and similarly $\alpha_1$ is a concave function of $\alpha_2$ on the points in $V_2^+(r)$. This means that if we consider the function $f(\alpha_1) = \alpha_2$ where $(\alpha_1,\alpha_2)\in V_2^+(r)$ then any $\alpha_1,\alpha_1'\in \Z/(3Dr)\Z$ satisfies
\[f\left(\frac{\alpha_1+\alpha_1'}{2}\right)\geq \frac{f(\alpha_1)+f(\alpha_1')}{2}.\] 
Note that by construction, $f$ is only well-defined on discrete values $x\in \Z/(3Dr)\Z$ such that $(x,y)\in V_2^+(r)$ for some $y\in \Z/(3Dr\Z)$, thus $(\alpha_1+\alpha_1')/2$ might not be a valid input to $f$. However, we can extend $f$ to include the edges connecting the vertices in the vertex hull $V_2^+(r)$ and $f$ is still a concave function.  Applying the inequality, we have
\[\frac{1}{D}\cdot \sum_{k = 0}^{D-1} a_2^k = \frac{1}{D}\sum_{k = 0}^{D-1} f(a_1^k) \leq f\left(\frac{1}{D} \sum_{k = 0}^{D-1} a_1^k\right) = f(\gamma_1) = \gamma_2.\]
Similarly, the function $g(a_2) = a_1$ where $(a_1,a_2)\in V_2^+(r)$ is also concave, so we have
\[\frac{1}{D}\cdot \sum_{k = 0}^{D-1} b_1^k = \frac{1}{D}\sum_{k = 0}^{D-1} g(b_2^k) \leq g\left(\frac{1}{D} \sum_{k = 0}^{D-1} b_2^k\right) = g(\gamma_2') = \gamma_1'.\]
Therefore we have 
\begin{equation}\label{eq4}D\cdot \gamma_2 \geq \sum_{k = 0}^{D - 1} a_2^k,\quad D\cdot \gamma'_1 \geq \sum_{k = 0}^{D - 1} b_1^k.\end{equation}
Hence by \cref{eq2}, we must have
\begin{equation}\label{eq5}D\cdot \gamma_2 = \sum_{k = 0}^{D - 1} a_2^k,\quad D\cdot \gamma'_1 = \sum_{k = 0}^{D - 1} b_1^k.\end{equation}
Therefore by combining \cref{eq1,eq3} and \cref{eq5}, we have
\begin{equation}\label{eq6}D\cdot \gamma = \sum_{k = 0}^{D-1} a^k, \quad D\cdot \gamma' = \sum_{k = 0}^{D-1} b^k.\end{equation}
Since $V_2^+(r)$ is a strictly convex set, the only set of solutions is $a^k = \gamma, b^k = \gamma'$ for all $0\leq k\leq D-1$.
\end{proof}

\begin{claim}\label{claim:Galt2_intersect}
Every pair of intersecting critical paths in $G_{alt}^2$ intersects on either a single vertex or a single edge. 
\end{claim}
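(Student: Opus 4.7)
The plan is to show that if two \emph{distinct} critical paths of $\Galt^2$ share vertices in two different layers $i_1 < i_2$, then $i_2 = i_1 + 1$; since any two vertices in consecutive layers of $\Galt^2$ are joined by at most one edge, the intersection is then either a single vertex or a single edge.

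I would write the two critical paths as $P = (x, \gamma, \gamma')$ and $P' = (y, \delta, \delta')$ and assume they share a vertex at layer $i_1$ and another at layer $i_2 > i_1$. Because the step type at each layer depends only on the layer's parity, both sub-paths from $i_1$ to $i_2$ use the same number $a$ of $\gamma$-steps (from even layers) and the same number $b$ of $\gamma'$-steps (from odd layers), with $a + b = i_2 - i_1$. Equating coordinate-wise total displacements of $P$ and $P'$ between these layers, and using $a + b \leq 2D$ together with $\|\gamma\|_\infty, \|\gamma'\|_\infty \leq r$ to lift identities from $\Z/(3Dr)\Z$ to $\Z$, yields three scalar equations: an outer identity $a\gamma_1 = a\delta_1$ on the first coordinate, an outer identity $b\gamma_2' = b\delta_2'$ on the third coordinate, and a mixed identity $a\gamma_2 + b\gamma_1' = a\delta_2 + b\delta_1'$ on the middle coordinate.

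When $i_2 - i_1 \geq 2$, the alternation of parities forces $a, b \geq 1$, so the outer identities give $\gamma_1 = \delta_1$ and $\gamma_2' = \delta_2'$. The crucial next step is to upgrade these to $\gamma = \delta$ and $\gamma' = \delta'$, using the same strict-convexity property of $V_2^+(r)$ that underlies \cref{claim:Galt2_unique_path}: the first coordinate of a vector in $V_2^+(r)$ uniquely determines the second, because two distinct lattice points sharing an $x$-coordinate cannot both be extreme points of the convex hull of $B_2(r)$ (the one with smaller $y$ in the first quadrant would lie in the interior of the hull). Once $\gamma = \delta$ and $\gamma' = \delta'$, the shared layer-$i_1$ vertex forces $x = y$, so $P = P'$, contradicting distinctness. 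Hence $i_2 - i_1 \leq 1$, establishing the claim.

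The main obstacle is the mixed middle-coordinate identity, which entangles $\gamma_2$ with $\gamma_1'$ and cannot by itself separate them; the separation arises only after extracting $\gamma_1 = \delta_1$ and $\gamma_2' = \delta_2'$ from the outer identities and then invoking the uniqueness-of-$y$-given-$x$ property of $V_2^+(r)$. The alternation counting and the lifting from $\Z/(3Dr)\Z$ to $\Z$ are routine bookkeeping analogous to what appears in the proof of \cref{claim:Galt2_unique_path}.
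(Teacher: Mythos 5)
Your proposal is correct and follows essentially the same route as the paper's proof: assume two shared vertices at layers differing by at least $2$, lift the congruences on the outer (first and third) coordinates from $\Z/(3Dr)\Z$ to $\Z$ to pin down $\gamma_1$ and $\gamma_2'$, invoke the fact that no two elements of the vertex hull $V_2^+(r)$ share a coordinate to recover $\gamma_2,\gamma_1'$, and conclude via the uniqueness of critical paths (\cref{claim:Galt2_unique_path}) that the two paths coincide. The only cosmetic difference is that you phrase it as equating displacements of two named paths and explicitly justify the no-shared-coordinate property, whereas the paper phrases it as uniquely reconstructing $(\gamma,\gamma')$ from the two shared vertices.
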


\begin{proof}
Suppose for contradiction that the intersection between a pair of critical paths contains two vertices $v_1 = v_x^i, v_2=v_y^j$ without an edge between them. 
That is, $|i-j|\geq 2$.  Without loss of generality, assume $i < j$ and let $i$ be even. Recall that a critical pair can be uniquely identified by a vertex on the path and a pair $(\gamma, \gamma')\in (V_2^+(r))^2$. By  \cref{claim:Galt2_unique_path}, every critical pair has a unique path, so it suffices to show that for any such pair of vertices $v_1,v_2$, we can uniquely determine $\gamma,\gamma'$ associated to the path. 

Let $x = (x_1,x_2,x_3), y = (y_1,y_2,y_3)$. In the path from layer $i$ to layer $j$, we must travel from an even layer $k$ to an odd layer $(k+1)$ exactly $\floor{(j-i)/2}$ times. Thus we know that
\begin{align*}
    y_1 &\equiv x_1 + \ceil{\frac{j-i}{2}}\gamma_1 \pmod{3Dr}\\
    y_3 &\equiv x_3 + \floor{\frac{j-i}{2}}\gamma'_2 \pmod{3Dr}.
\end{align*}
Since $\gamma,\gamma'\in V_2^+(r)$, we have $\ceil{\frac{j-i}{2}}\gamma_1<3Dr$ and $\floor{\frac{j-i}{2}}\gamma'_2 < 3Dr$. Then since all other variables are known, we can uniquely solve for the values of $\gamma_1,\gamma'_2$. Then we can uniquely determine the values of $\gamma_2,\gamma'_1$ since $V_2^+(r)$ is a vertex hull in which no two elements share a coordinate. 
\end{proof}

\subsection{The Alternation Product Graph $G_{alt}^3$}\label{subsec:G_alt_3}

In this section, we define our second new alternation product, which produces the graph $G_{alt}^3$ when applied to three copies of the base graph $G_0$. We prove \cref{thm:Galt3} which details the properties of $G_{alt}^3$. We first recall \cref{thm:Galt3}:
\thmGaltthree*

\paragraph{Construction of $\mathbf{\Galt^3}$} We construct the alternation product graph $G_{alt}^3 = (V_{alt}^3,E_{alt}^3)$ using three copies of the base graph $G_0$ with $d = 2$. $G_{alt}^3$ and its set $P_{alt}^3$ of critical pairs are defined as follows:
\begin{itemize}
    \item[--] \textbf{Vertex Set $V_{alt}^3$}: The graph $G_{alt}^3$ consists of $3D+1$ layers with each layer having $(3Dr)^4$ vertices representing all of the unique elements in $(\Z/(3Dr)\Z)^4$.
    \item[--] \textbf{Edge Set $E_{alt}^3$}: The edges of $\Galt^3$ are defined as follows:
        \begin{itemize}
            \item If $i\equiv 0 \pmod 3$, we put an edge between $v_{(x_1,x_2,x_3,x_4)}^i$ and $v_{(y_1,y_2,y_3,y_4)}^{i+1}$ if and only if there is an edge from $v_{(x_1,x_2)}^{i/3}$ to $v_{(y_1,y_2)}^{(i/3)+1}$ in $G_0$, i.e. if and only if there exists $\gamma\in V_2^+(r)$ such that $(x_1,x_2) + \gamma\equiv (y_1,y_2)$ in $(\Z/(3Dr)\Z)^d$. 
            \item If $i\equiv 1 \pmod 3$, we put an edge between $v_{(x_1,x_2,x_3,x_4)}^i$ and $v_{(y_1,y_2,y_3,y_4)}^{i+1}$ if and only if there is an edge from $v_{(x_2,x_3)}^{\floor{i/3}}$ to $v_{(y_2,y_3)}^{\floor{i/3}+1}$ in $G_0$.
            \item If $i\equiv 2 \pmod 3$, we put an edge between $v_{(x_1,x_2,x_3,x_4)}^i$ and $v_{(y_1,y_2,y_3,y_4)}^{i+1}$ if and only if there is an edge from $v_{(x_3,x_4)}^{\floor{i/3}}$ to $v_{(y_3,y_4)}^{\floor{i/3}+1}$ in $G_0$. 
        \end{itemize}
        We remark that since the edges between adjacent layers in $G_0$ are identical, the superscripts on the vertices in $G_0$ in the above definitions do not actually make a difference. This notation is only for illustrating how the alternation product corresponds to three copies of the base graph.
    \item[--] \textbf{Critical Pairs $P_{alt}^3$}:  For all $x = (x_1,x_2,x_3,x_4)\in (\Z/(3Dr)\Z)^4$ and $\gamma = (\gamma_1, \gamma_2), \gamma' = (\gamma_1',\gamma_2'), \gamma'' = (\gamma_1'', \gamma_2'')\in V_2^+(r)$, there is a critical pair in $P_{alt}^3$ of the form 
    \[(v_{(x_1, x_2,x_3,x_4)}^0, v_{(x_1+D\gamma_1, x_2+D\gamma_2+D\gamma_1',x_3+D\gamma_2'+D\gamma_1'', x_4+D\gamma_2'')}^{3D}).\]
    Note that a critical pair can be uniquely identified by an element $x\in (\Z/(3Dr)\Z)^4$ and a $3$-tuple $(\gamma, \gamma', \gamma'')\in (V_2^+(r))^3$.
\end{itemize}

\paragraph{Properties of $\mathbf{\Galt^3}$} Now we prove the properties and bounds of $\Galt^3$ stated in \cref{thm:Galt3} by the following sequence of claims.

\begin{claim}\label{claim:Galt3_unique_path}
For any critical pair $(v_x^0, v_y^{3D})$ in $P_{alt}^3$, there exists a unique path from $v_x^0$ to $v_y^{3D}$ in $G_{alt}^3$.
\end{claim}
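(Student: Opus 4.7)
The plan is to mimic the proof of \cref{claim:Galt2_unique_path}, with one new twist to accommodate the extra middle sequence of steps. Suppose for contradiction that a path from $v_x^0$ to $v_y^{3D}$ exists that differs from the canonical critical path. Any such path is determined by three sequences $a^0, \dots, a^{D-1}$, $b^0, \dots, b^{D-1}$, $c^0, \dots, c^{D-1}$ of vectors in $V_2^+(r)$, corresponding to the transitions from layers with $i \equiv 0, 1, 2 \pmod 3$ respectively. Writing $\gamma = (\gamma_1,\gamma_2)$, $\gamma' = (\gamma'_1,\gamma'_2)$, $\gamma'' = (\gamma''_1,\gamma''_2)$, the four endpoint coordinates give
\begin{align*}
D\gamma_1 &= \sum_{k=0}^{D-1} a_1^k, \\
D\gamma_2 + D\gamma'_1 &= \sum_{k=0}^{D-1} a_2^k + \sum_{k=0}^{D-1} b_1^k, \\
D\gamma'_2 + D\gamma''_1 &= \sum_{k=0}^{D-1} b_2^k + \sum_{k=0}^{D-1} c_1^k, \\
D\gamma''_2 &= \sum_{k=0}^{D-1} c_2^k.
\end{align*}
Because each step has Euclidean norm at most $r$ and there are only $3D$ steps, each side is strictly below $3Dr$, so these equalities hold in $\mathbb{Z}$ and not just in $\mathbb{Z}/(3Dr)\mathbb{Z}$.

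Next I would run the concavity argument from \cref{claim:Galt2_unique_path} on the endpoint equations. The first and fourth equations pin down $\sum a_1^k = D\gamma_1$ and $\sum c_2^k = D\gamma''_2$ exactly, so concavity of the vertex-hull function $f$ (and its inverse $g$) on $V_2^+(r)$ yields $\sum a_2^k \leq D\gamma_2$ and $\sum c_1^k \leq D\gamma''_1$. Substituting these bounds into the middle equations gives $\sum b_1^k \geq D\gamma'_1$ and $\sum b_2^k \geq D\gamma'_2$.

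The main new obstacle, absent in the $G_{alt}^2$ case, is that neither $\sum b_1^k$ nor $\sum b_2^k$ is directly fixed by an endpoint equation, so the previous reasoning does not immediately close. I plan to exploit the fact that $f$ is not only concave but also monotonically decreasing on $V_2^+(r)$, since the vectors in $V_2^+(r)$ lie on the Pareto frontier of $\mathrm{conv}(V_2^+(r))$ in the first quadrant. Combining concavity and then monotonicity,
\[ \tfrac{1}{D}\sum_{k=0}^{D-1} b_2^k \;=\; \tfrac{1}{D}\sum_{k=0}^{D-1} f(b_1^k) \;\leq\; f\!\left(\tfrac{1}{D}\sum_{k=0}^{D-1} b_1^k\right) \;\leq\; f(\gamma'_1) \;=\; \gamma'_2, \]
where the final inequality uses $\tfrac{1}{D}\sum b_1^k \geq \gamma'_1$ together with the monotonicity of $f$. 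This forces $\sum b_2^k = D\gamma'_2$, and a symmetric argument using $g$ forces $\sum b_1^k = D\gamma'_1$.

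With the $b$-sums pinned, the middle equations now force $\sum a_2^k = D\gamma_2$ and $\sum c_1^k = D\gamma''_1$, hence $\sum a^k = D\gamma$, $\sum b^k = D\gamma'$, $\sum c^k = D\gamma''$. Since $\gamma$, $\gamma'$, $\gamma''$ are vertices of $\mathrm{conv}(V_2^+(r))$ and each $a^k, b^k, c^k$ lies in $V_2^+(r)$, the same strict-convexity step that concludes \cref{claim:Galt2_unique_path} forces $a^k = \gamma$, $b^k = \gamma'$, $c^k = \gamma''$ for every $k$. Thus the path must coincide with the critical path, contradicting our assumption.
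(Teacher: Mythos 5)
Your proof is correct and takes essentially the same approach as the paper's: the same four endpoint equations lifted from $\Z/(3Dr)\Z$ to $\Z$, the same concavity (Jensen) bounds coming from the first and fourth equations, and the same strict-convexity step to conclude $a^k=\gamma$, $b^k=\gamma'$, $c^k=\gamma''$ for all $k$. The only difference is how the middle sums are pinned down — the paper assumes the two inequalities are strict and derives a contradiction with strict convexity, whereas you obtain $\sum b_1^k \geq D\gamma_1'$, $\sum b_2^k\geq D\gamma_2'$ and then force equality via concavity plus monotonicity of the vertex-hull function; this is an equivalent (and if anything slightly more complete) way of closing the same case analysis.
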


\begin{proof}
Let $v_x^0 = v_{x^0}^0, v_{x^1}^1,\dots, v_{x^{3D-1}}^{3D-1}, v_{x^{3D}}^{3D} = v_{y}^{3D}$ be the vertices on the critical path from $v_x^0$ to $v_{y}^{3D}$, where $x^i = (x_1^i, x_2^i, x_3^i,x_4^i)$ for $0\leq i\leq 3D$. By construction of the critical pairs, there exist $\gamma = (\gamma_1,\gamma_2), \gamma' = (\gamma'_1,\gamma'_2), \gamma'' = (\gamma''_1,\gamma''_2)\in V_2^+(r)$ such that the critical path satisfies for all $i$:
\begin{itemize}
    \item[--] if $i\equiv 0 \pmod 3$, then $(x_1^i,x_2^i) + \gamma\equiv (x_1^{i+1},x_2^{i+1})\pmod{3Dr}$;
    \item[--] if $i\equiv 1 \pmod 3$, then $(x_2^i,x_3^i) + \gamma'\equiv (x_2^{i+1},x_3^{i+1})\pmod{3Dr}$;
    \item[--] If $i\equiv 2 \pmod 3$, then $(x_3^i,x_4^i) + \gamma''\equiv (x_3^{i+1},x_4^{i+1})\pmod{3Dr}$.
\end{itemize}
Now we show that this path is unique. Suppose there exist vectors $a^1,\ldots,a^{D-1}$, $b^1,\dots,b^{D-1}$, and $c^1,\dots,c^{D-1}$ in $V_2^+(r)$ such that there is another path from $v_x^0$ to $v_y^{3D}$ that goes through the vertices $v_x^0 = v_{z^0}^0, v_{z^1}^1,\dots, v_{z^{3D-1}}^{3D-1}, v_{z^{3D}}^{3D} = v_{y}^{3D}$ where $z^i = (z_1^i, z_2^i, z_3^i,z_4^i)$ for $0\leq i\leq 3D$ such that for all $i$
\begin{itemize}
    \item[--] if $i\equiv 0 \pmod 3$, then $(z_1^i,z_2^i) + a^{i/3}\equiv (z_1^{i+1},z_2^{i+1}) \pmod{3Dr}$;
    \item[--] if $i\equiv 1 \pmod 3$, then $(z_2^i,z_3^i) + b^{\floor{i/3}}\equiv (z_2^{i+1},z_3^{i+1}) \pmod{3Dr}$;
    \item[--] If $i\equiv 2 \pmod 3$, then $(z_3^i,z_4^i) + c^{\floor{i/3}}\equiv (z_3^{i+1},z_4^{i+1}) \pmod{3Dr}$.
\end{itemize}
Then we must have the following equations 
\begin{align}
    D\cdot \gamma_1 &= \sum_{k = 0}^{D - 1} a_1^k,\label{eq7}\\
    D\cdot \gamma_2 + D\cdot \gamma'_1 &= \sum_{k = 0}^{D - 1} a_2^k + \sum_{k = 0}^{D - 1} b_1^k,\label{eq8}\\
    D\cdot \gamma'_2 + D\cdot \gamma''_1 &= \sum_{k = 0}^{D - 1} b_2^k + \sum_{k = 0}^{D - 1} c_1^k,\label{eq9}\\
    D\cdot \gamma''_2 &= \sum_{k = 0}^{D - 1} c_2^k.\label{eq10}
\end{align}
Note that \cref{eq7,eq8,eq9,eq10} along with the following equations hold in both $\Z/(3Dr)\Z$ and in $\Z$ since the quantities involved on both sides are less than $3Dr$.

Similar to the the proof of \cref{claim:Galt2_unique_path}, for the vectors $(\alpha_1,\alpha_2)$ in $V_2^+(r)$, since $V_2^+(r)$ is a strictly convex set, notice that if we consider $\alpha_2$ as a function of $\alpha_1$, then it is a concave function on the points in $V_2^+(r)$; and conversely $\alpha_1$ is a concave function of $\alpha_2$ on the points in $V_2^+(r)$. Therefore since we have \cref{eq7,eq10}, we must have
\begin{equation}\label{eq11}
    D\cdot \gamma_2 \geq \sum_{k = 0}^{D - 1} a_2^k,\quad D\cdot \gamma''_1 \geq \sum_{k = 0}^{D - 1} c_1^k.
\end{equation}
Suppose that this inequality is strict, that is, we have
\begin{equation}\label{eq12}
    D\cdot \gamma_2 > \sum_{k = 0}^{D - 1} a_2^k,\quad D\cdot \gamma''_1 > \sum_{k = 0}^{D - 1} c_1^k.
\end{equation}
Then by \cref{eq8,eq9} we must have
\begin{equation}\label{eq13}
    D\cdot \gamma'_1 < \sum_{k = 0}^{D - 1} b_1^k,\quad D\cdot \gamma'_2 < \sum_{k = 0}^{D - 1} b_2^k,
\end{equation}
which would violate the fact that $V_2^+(r)$ is a strictly convex set. Therefore we must have
\begin{equation}\label{eq14}
    D\cdot \gamma_2 = \sum_{k = 0}^{D - 1} a_2^k,\quad D\cdot \gamma''_1 =\sum_{k = 0}^{D - 1} c_1^k,\quad  D\cdot \gamma'_1 = \sum_{k = 0}^{D - 1} b_1^k,\quad D\cdot \gamma'_2 = \sum_{k = 0}^{D - 1} b_2^k.
\end{equation}
Combining \cref{eq7,eq10,eq14}, we have the equations
\begin{equation}\label{eq15}
    D\cdot \gamma = \sum_{k = 0}^{D-1} a^k,\quad D\cdot \gamma' = \sum_{k = 0}^{D-1} b^k, \quad D\cdot \gamma'' = \sum_{k = 0}^{D-1} c^k.
\end{equation}
Since $\{a^k\},\{b^k\},\{c^k\}$ for $0\leq k\leq D-1$ are vectors in $V_2^+(r)$, and $V_2^+(r)$ is a strictly convex set, the only set of solutions is $a^k = \gamma, b^k = \gamma', c^k = \gamma''$ for all $0\leq k\leq D-1$.
\end{proof}

\begin{claim}\label{claim:Galt3_intersect_p2}
Each pair of intersecting critical paths of $G_{alt}^3$ can only intersect on a path of length $2$, or a single edge, or a single vertex.
\end{claim}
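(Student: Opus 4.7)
The plan is to extend the uniqueness argument used for $G_{alt}^2$ (\cref{claim:Galt2_intersect}). The core step is to show that any two vertices $v_x^i, v_y^j$ lying on a common critical path of $G_{alt}^3$ with $j - i \geq 3$ uniquely determine the parameter triple $(\gamma, \gamma', \gamma'')$, and hence (by \cref{claim:Galt3_unique_path}) the critical path itself. Combined with a short check that two shared vertices at layer distance exactly $2$ force the vertex between them to be shared as well, this will yield the claim.

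I would first prove the distance-$3$ uniqueness lemma. Fix $v_x^i, v_y^j$ on a common critical path with $j - i \geq 3$, and let $n_k$ denote the number of indices $s \in [i, j-1]$ with $s \equiv k \pmod{3}$, so that $n_0, n_1, n_2 \geq 1$. Along any critical path, coordinate $1$ is modified only at steps of residue $0$ (each time by $\gamma_1$) and coordinate $4$ only at steps of residue $2$ (each time by $\gamma''_2$), giving $y_1 - x_1 \equiv n_0 \gamma_1 \pmod{3Dr}$ and $y_4 - x_4 \equiv n_2 \gamma''_2 \pmod{3Dr}$. Because $n_k \leq D$ and the coordinates of vectors in $V_2^+(r)$ are at most $r$, these products lie strictly below $3Dr$, so the congruences become genuine equalities in $\Z$ and determine $\gamma_1$ and $\gamma''_2$. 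Using that no two vectors of $V_2^+(r)$ share a coordinate (the same fact invoked in the proof of \cref{claim:Galt2_intersect}), the full vectors $\gamma$ and $\gamma''$ are recovered. Substituting into the coordinate-$2$ and coordinate-$3$ difference equations isolates $n_1 \gamma'_1$ and $n_1 \gamma'_2$, and since $n_1 \geq 1$ these pin down $\gamma'$ as well.

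Next I would handle the case where two critical paths share vertices $v_x^i$ and $v_y^{i+2}$ at layer distance exactly $2$, and argue that the vertex at layer $i+1$ on both paths must coincide. The two steps between layers $i$ and $i+2$ use exactly two of the three vectors $\gamma, \gamma', \gamma''$ (determined by $i \bmod 3$). The same coordinate-tracking analysis, now with two of the $n_k$ equal to $1$ and the third equal to $0$, uniquely determines those two vectors from the coordinates that are affected by exactly one of the two steps (again using the vertex-hull non-sharing property). This forces the intermediate vertex at layer $i+1$ to be identical on the two critical paths.

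Combining the two parts, if two distinct critical paths $P, Q$ intersect, their shared vertices must lie within a window of at most $3$ consecutive layers, and whenever this window actually spans $3$ layers the middle vertex is shared as well. Hence the intersection is a single vertex, a single edge, or a path of length $2$. The main subtlety in the writeup is the arithmetic bookkeeping that ensures each modular equation lifts to $\Z$, i.e.\ that the partial sums $n_k \gamma_k$ remain strictly below $3Dr$, which is exactly the same kind of care taken in \cref{claim:Galt3_unique_path}.
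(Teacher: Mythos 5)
Your proposal is correct and follows essentially the same route as the paper's proof: for two shared vertices at layer distance at least $3$ you recover $\gamma_1$ and $\gamma''_2$ from the first and fourth coordinates, lift the congruences modulo $3Dr$ to $\Z$ because all quantities stay below $3Dr$, use the vertex-hull property to recover $\gamma$ and $\gamma''$, then solve for $\gamma'$ from coordinates $2$ and $3$, so the two paths coincide by \cref{claim:Galt3_unique_path}. The only difference is that you additionally verify that two shared vertices at layer distance exactly $2$ force the intermediate vertex to be shared; this check is correct (and the same coordinate bookkeeping does it), whereas the paper leaves that step implicit, so your writeup justifies the stated form of the intersection slightly more completely.
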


\begin{proof}
Suppose for contradiction that the intersection between a pair of critical paths contains two vertices $v_1 = v_x^i, v_2=v_y^j$ such that $|i-j|\geq 3$. Without loss of generality, assume $i < j$ and let $i\equiv 0\pmod 3$. Then by \cref{claim:Galt3_unique_path}, every critical pair has a unique path, so it suffices to show that for any such pair of vertices $v_1,v_2$, we can uniquely determine $\gamma,\gamma', \gamma''\in V_2^+(r)$ associated to the path. 

Let $x = (x_1,x_2,x_3,x_4), y = (y_1,y_2,y_3,y_4)$ and suppose $(j-i)\equiv 0\pmod 3$. Then in the path from layer $i$ to layer $j$, we must travel from a layer $k\equiv 0,1,2\pmod 3$ to a layer $(k+1)\equiv 1,2,0\pmod 3$ respectively exactly $(j-i)/3$ times. For the cases where $(j-i)\equiv 1\pmod 3$, we travel from a layer $k\equiv 0\pmod 3$ to $k+1\equiv 1\pmod 3$ $\ceil{(j-i)/3}$ times and similarly for the case where $(j-i)\mod 3\equiv 2$, so here we describe the case where $(j-i)\equiv 0\pmod 3$ and other cases follow with almost identical argument. In this case we know that
\begin{align*}
    y_1 &\equiv x_1 + \frac{j-i}{3}\gamma_1 \pmod{3Dr}\\
    y_4 &\equiv x_4 + \frac{j-i}{3}\gamma_2'' \pmod{3Dr}.
\end{align*}
Since $\gamma,\gamma''\in V_2^+(r)$, we have $\frac{j-i}{3}\gamma_1<3Dr$ and $\frac{j-i}{3}\gamma_2'' <3Dr$. Then since all other variables except $\gamma_1$ and $\gamma_2''$ are known, we can uniquely solve for the values of $\gamma_1, \gamma_2''$. There are unique elements $\gamma_2, \gamma_1''$ such that $(\gamma_1,\gamma_2), (\gamma_1'',\gamma_2'')$ are elements of $V_2^+(r)$ so we can uniquely solve for $\gamma_2, \gamma_1''$. We also know that
\begin{align*}
    y_2 &\equiv x_2 + \frac{j-i}{3}\gamma_2 + \frac{j-i}{3}\gamma_1' \pmod{3Dr}\\
    y_3 &\equiv x_3 + \frac{j-i}{3}\gamma_1'' + \frac{j-i}{3}\gamma_2'\pmod{3Dr}.
\end{align*}
Similarly, since the values on both sides of the equations are less than $3Dr$ and we already know the values for $\gamma_2,\gamma_1''$, we can uniquely solve for the values $\gamma' = (\gamma_1', \gamma_2')$. Hence we have uniquely determined $\gamma, \gamma', \gamma''\in V_2^+(r)$.
\end{proof}

\begin{claim}\label{claim:Galt3_p2_bound}
Given a path $p$ of length $2$ in $G_{alt}^3$, at most $|V_2^+(r)|$ critical paths can contain $p$ as a subpath.
\end{claim}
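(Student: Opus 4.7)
}

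The plan is to show that any length-$2$ path $p$ in $G_{alt}^3$ determines all of the data of a critical path except for at most one of the three vectors $(\gamma,\gamma',\gamma'')\in (V_2^+(r))^3$, and hence the number of critical paths through $p$ is bounded by $|V_2^+(r)|$. This will match the bound $O(r^{2/3})$ in property 4 of \cref{thm:Galt3} since $|V_2^+(r)| = \Theta(r^{2/3})$ by \cref{lem:polytope_vertex_num}.

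Write $p = v_{u}^{i}, v_{w}^{i+1}, v_{z}^{i+2}$. By symmetry between the three residue classes, I will handle the case $i \equiv 0 \pmod 3$; the other two cases are identical up to relabeling which coordinates each edge affects. By the construction of $\Ealt^3$, the first edge of $p$ comes from the rule for $i \equiv 0 \pmod 3$, so the vector $\gamma = (\gamma_1,\gamma_2) \in V_2^+(r)$ associated to a critical path through $p$ must satisfy $(u_1,u_2) + \gamma \equiv (w_1,w_2) \pmod{3Dr}$. Since $\gamma \in V_2^+(r)$ has both coordinates in $[0,r)$ and $r < 3Dr$, this congruence uniquely pins down $\gamma$ as $(w_1-u_1, w_2-u_2)$. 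The same argument applied to the second edge of $p$ (which is governed by the rule for residue $1$) uniquely determines $\gamma' = (w_3-w_2$ style, modulo the shifted coordinates$)$; that is, $\gamma'$ is recovered from the difference of the $(2,3)$-coordinates of $v_w^{i+1}$ and $v_z^{i+2}$.

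What remains free is the third vector $\gamma''$ and the basepoint $x^0$ at layer $0$. I claim that once $\gamma''$ is fixed, the entire critical path is determined. Indeed, by \cref{claim:Galt3_unique_path}, a critical path is uniquely specified by the pair $(x^0, (\gamma,\gamma',\gamma''))$. Given a candidate $\gamma'' \in V_2^+(r)$, I will recover $x^0$ by walking backwards from the known vertex $v_u^i$: each backward step from layer $k$ to layer $k-1$ subtracts $\gamma$, $\gamma'$, or $\gamma''$ from the appropriate pair of coordinates according to $(k-1) \bmod 3$, and every one of these three vectors is now known. Thus for each of the $|V_2^+(r)|$ choices of $\gamma''$ we obtain at most one critical path containing $p$ as a subpath, and the total count is at most $|V_2^+(r)|$.

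The main thing to verify carefully is simply the residue-class bookkeeping: checking that in each of the three cases for $i \bmod 3$, the two edges of $p$ happen to lock down exactly two of the three vectors $\gamma,\gamma',\gamma''$ (and always leave one completely free). The strict-convexity machinery from the proof of \cref{claim:Galt3_unique_path} is not needed here, since we are only recovering individual vectors from individual edges (not averaged sums over $D$ steps), so no obstruction arises. I do not expect any other substantive difficulty.
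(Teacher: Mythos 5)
Your proposal is correct and follows essentially the same argument as the paper: the two edges of $p$ pin down two of the three vectors in $(V_2^+(r))^3$, and each choice of the remaining free vector (e.g.\ $\gamma''$ when $i\equiv 0 \pmod 3$) determines at most one starting point $x^0$, so at most $|V_2^+(r)|$ critical paths contain $p$. The only cosmetic difference is that you read the two vectors off the edges directly using the middle vertex of $p$, while the paper recovers them from the endpoints alone (using that elements of the vertex hull are determined by a single coordinate); both are valid.
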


\begin{proof}
Let $(v_{\alpha}^i, v_{\beta}^{i+2})$ with $\alpha = (\alpha_1,\ldots, \alpha_4), \beta = (\beta_1,\ldots, \beta_4)$ denote the endpoints of the path $p$ of length $2$. Since the three cases where $i\equiv 0,1,2\pmod 3$ are nearly identical, we assume that $i\equiv 0\pmod 3$. Then for some $\gamma = (\gamma_1, \gamma_2), \gamma' = (\gamma_1', \gamma_2')$ in $V_2^+(r)$, we have that $\alpha,\beta$ must satisfy 
\begin{align*}
    \beta_1 &\equiv \alpha_1 + \gamma_1 \pmod{3Dr},\\
    \beta_2 &\equiv \alpha_2 + \gamma_2 + \gamma_1'\pmod{3Dr},\\
    \beta_3 &\equiv \alpha_3 + \gamma_2' \pmod{3Dr},\\
    \beta_4 &\equiv \alpha_4 \pmod{3Dr}.
\end{align*}
 Since the quantities on both sides of the equations are less than $3Dr$ and we know $\alpha$ and $\beta$, we can uniquely determine $\gamma_1, \gamma_2'$ from the above equations. Then since $\gamma, \gamma'$ are elements of $V_2^+(r)$, we can uniquely determine them given $\gamma_1$ and $\gamma_2'$. 

In addition we know that a critical path starting at $v_x^0$ with $x = (x_1,x_2,x_3,x_4)$ that uses $p$ as a subpath must satisfy
\begin{align*}
    \beta_1 &= x_1 + \left(\frac{i}{3}+1\right)\cdot \gamma_1 \pmod{3Dr},\\
    \beta_2 &= x_2 + \left(\frac{i}{3}+1\right)\cdot \gamma_2 + \left(\frac{i}{3}+1\right)\cdot \gamma_1' \pmod{3Dr},\\
    \beta_3 &= x_3 + \left(\frac{i}{3}+1\right)\cdot \gamma_2' + \frac{i}{3}\cdot \gamma_1'' \pmod{3Dr},\\
    \beta_4 &= x_4 + \frac{i}{3}\cdot \gamma_2''\pmod{3Dr}.
\end{align*}
 Then we can uniquely determine $x_1, x_2$ since we know $\gamma, \gamma', \beta_1,\beta_2$. Note that $\gamma'' = (\gamma_1'', \gamma_2'')\in V_2^+(r)$ and $x_3,x_4\in \Z/(3Dr)\Z$ remain undetermined. However, notice that by choosing any $\gamma''\in V_2^+(r)$, we can uniquely determine $x_3, x_4$ from the above equations involving $\beta_3$ and $\beta_4$. Since a critical path can be uniquely identified by an element $x\in (\Z/(3Dr)\Z)^4$ and a $3$-tuple $(\gamma, \gamma', \gamma'')\in (V_2^+(r))^3$, we can therefore conclude that at most $|V_2^+(r)|$ critical paths can pass through $p$.
\end{proof}

\section{Application to Shortcuts}\label{sec:app_shortcuts}

In this section we will prove \cref{thm:shortcut_LB}, which we recall: \thmshortcutLB* 

We prove \cref{thm:shortcut_LB} using $G_{alt}^3$. As a warm-up we we first show how to obtain a lower bound of $\Omega(n^{2/17})$ using $G_{alt}^2$. This bound is not as good as our final bound of $\Omega(n^{1/8})$, but still better than the current best known bound of $\Omega(n^{1/11})$ \cite{Huang2018LowerBO}. 

It may appear that further generalizing the construction of $\Galt^2$ and $\Galt^3$ to $\Galt^k$ by taking the alternation product of more copies of $G_0$ could lead to further improvement of the lower bound. However, in \cref{sec:conclusion} we will show that in fact $\Galt^3$ gives the best possible bound using this technique.

\paragraph{$\mathbf{\Omega(n^{2/17})}$ Lower Bound from $\mathbf{\Galt^2}$} First we state a lemma relating the size of a shortcut set of $\Galt^2$ with the diameter that it yields.

\begin{lemma}\label{lem:G_alt_2_shortcut_lb}
Let $E'$ be a shortcut set on $G_{alt}^2(D,r)$. If $|E'| < |P_{alt}^2|$ then the diameter of $G' = (V_{alt}^2, E_{alt}^2\cup E')$ is $2D$. 
\end{lemma}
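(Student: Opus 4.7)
The plan is to prove, via a charging argument, that every shortcut edge in $E'$ can reduce the $G'$-distance of at most one critical pair below $2D$; the lemma then follows by pigeonhole. First note that $\Galt^2$ is a layered DAG with $2D+1$ layers whose edges go only between consecutive layers, so any reachable pair in $\Galt^2$ has distance equal to the layer difference; in particular, the diameter of $\Galt^2$ over its transitive closure is exactly $2D$. Since $E' \subseteq E^*$, reachability in $\Galt^2$ and in $G'$ coincide, and adding shortcuts cannot increase distances, so $\mathrm{diam}(G') \leq 2D$ automatically. It therefore suffices to exhibit a critical pair $(s,t) \in \Palt^2$ with $\dist_{G'}(s,t) = 2D$.

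Suppose instead that $(s,t) \in \Palt^2$ is a critical pair with $\dist_{G'}(s,t) < 2D$, and fix any witnessing path $s \to \cdots \to t$ in $G'$ of length $L < 2D$. On this path, at least one shortcut $(u,v) \in E'$ has a layer gap of at least $2$: writing $m$ for the number of shortcuts on the path and $a_1, \dots, a_m$ for their layer gaps, the total layer advance gives $(L - m) + \sum_\ell a_\ell = 2D$, so $\sum_\ell a_\ell = 2D - L + m > m$, forcing some $a_\ell \geq 2$. Fix such a shortcut $(u,v)$ with $u$ in layer $i$ and $v$ in layer $j$, where $j - i \geq 2$. By the layered structure of $\Galt^2$, $\dist_{\Galt^2}(s,u) = i$, $\dist_{\Galt^2}(u,v) = j - i$, and $\dist_{\Galt^2}(v,t) = 2D - j$. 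Concatenating the underlying $\Galt^2$-paths yields a length-$2D$ path from $s$ to $t$ in $\Galt^2$ passing through both $u$ and $v$. By \cref{claim:Galt2_unique_path}, this coincides with the unique critical path of $(s,t)$, so both $u$ and $v$ lie on it.

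To finish, I argue that for any shortcut $(u,v)$ with $j - i \geq 2$, at most one critical path can contain both $u$ and $v$. Two distinct critical paths containing both would share at least two vertices that are not joined by an edge of $\Galt^2$, violating \cref{claim:Galt2_intersect}, which says any two intersecting critical paths share at most a single vertex or a single edge. Hence mapping each shortened critical pair $(s,t)$ to the shortcut $(u,v)$ chosen above gives an injection from shortened critical pairs into $E'$, so the number of shortened critical pairs is at most $|E'| < |\Palt^2|$. Consequently some critical pair $(s,t)$ has $\dist_{G'}(s,t) = 2D$, yielding $\mathrm{diam}(G') = 2D$. The main obstacle is the inference that a shortcut with gap at least $2$ appearing on a short path from $s$ to $t$ must have both endpoints on the critical path of $(s,t)$; this rests on combining the layered structure of $\Galt^2$ with the uniqueness property of \cref{claim:Galt2_unique_path}.
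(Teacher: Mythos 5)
Your proof is correct and takes essentially the same route as the paper's: each critical pair whose distance drops below $2D$ is charged to a shortcut of layer gap at least $2$ whose endpoints lie on that pair's unique critical path, and \cref{claim:Galt2_intersect} makes this charging injective, so fewer than $|\Palt^2|$ shortcuts leave some pair at distance $2D$. You merely make explicit the layer-counting and the reachability/uniqueness step (via \cref{claim:Galt2_unique_path}) that the paper's terser proof leaves implicit.
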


\begin{proof}

Recall that the diameter of $G_{alt}^2$ is $2D$.  
Note that every useful shortcut connects a pair of vertices at distances at least 2. Then, by Lemma~\ref{claim:Galt2_intersect}, each shortcut can only decrease the distance between one critical pair in $P_{alt}^2$. Therefore with fewer than $|P_{alt}^2|$ shortcuts, we cannot reduce the diameter below $2D$.
\end{proof}

We set $|P_{alt}^2|=|E'|+1$ and then solve for the parameters $r$ and $D$, as follows. By assumption, $|E'|=\Theta(m)=\Theta(|\Ealt^2|)$, so $|P_{alt}^2|=\Theta(|\Ealt^2|)$. Thus, by the bounds on $|P_{alt}^2|$ and $|\Ealt^2|$ from \cref{thm:Galt2}, we have \[(Dr)^3 r^{4/3} = \Theta( D(Dr)^3 r^{2/3}).\] Solving for $r$, we get $r = \Theta(D^{3/2})$. Then we have
\[|\Valt^2| = n = \Theta (D(Dr)^3) = \Theta(D^{17/2}).\]
Hence this shows that $D = \Omega(n^{2/17})$.

Since $|P_{alt}^2|=|E'|+1$, Lemma~\ref{lem:G_alt_2_shortcut_lb} implies that a shortcut set of size $O(m)$ cannot reduce the diameter below $2D=\Omega(n^{2/17})$.

\paragraph{$\mathbf{\Omega(n^{1/8})}$ Lower Bound from $\mathbf{\Galt^3}$}

First we state a lemma relating the size of a shortcut set of $\Galt^3$ with the diameter that it yields.

\begin{lemma}\label{lem:Galt3_diameter}
Let $E'$ be a shortcut set for $G_{alt}^3(D,r)$ with $|E'| = O(|E_{alt}^3|)$. If $|E'|< |P_{alt}^3|$ then  the diameter of $G' = (V_{alt}^3, E_{alt}^3\cup E')$ is $3D - O(r^{2/3})$. 
\end{lemma}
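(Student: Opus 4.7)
The plan is to adapt the proof of Lemma~\ref{lem:G_alt_2_shortcut_lb}, accommodating the weaker intersection property of $\Galt^3$: by Claim~\ref{claim:Galt3_intersect_p2} two critical paths may now share a sub-path of length $2$ rather than merely a single edge, so a single shortcut can potentially shorten several critical pairs at once. The goal is to show that, despite this, there remains a critical pair whose distance is reduced by at most $O(r^{2/3})$.

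First I would classify each shortcut in $E'$ by the layer difference $\ell$ of its endpoints: call it \emph{long} if $\ell \geq 3$ and \emph{short} if $\ell = 2$ (shortcuts with $\ell \leq 1$ produce no saving and can be discarded). By Claim~\ref{claim:Galt3_intersect_p2}, the two endpoints of a long shortcut uniquely determine the critical path containing them (if any), so each long shortcut lies on at most one critical path. By Claim~\ref{claim:Galt3_p2_bound}, each short shortcut lies on at most $|V_2^+(r)| = O(r^{2/3})$ critical paths. For each critical pair $(s,t) \in P_{alt}^3$, let $A_{(s,t)}$ (respectively $B_{(s,t)}$) denote the number of long (respectively short) shortcuts of $E'$ whose endpoints both lie on the critical path of $(s,t)$; then
\[
\sum_{(s,t) \in P_{alt}^3} A_{(s,t)} \leq |E'| \qquad \text{and} \qquad \sum_{(s,t) \in P_{alt}^3} B_{(s,t)} \leq K\,r^{2/3}\,|E'|
\]
for some absolute constant $K$. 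A pigeonhole argument that crucially uses the hypothesis $|E'| < |P_{alt}^3|$ then produces a critical pair $(s^\ast, t^\ast)$ with $A_{(s^\ast, t^\ast)} = 0$ and $B_{(s^\ast, t^\ast)} \leq 2Kr^{2/3}$: at most $|E'|_{\mathrm{long}}$ critical pairs have $A_{(s,t)} \geq 1$, at most $|E'|_{\mathrm{short}}/2$ have $B_{(s,t)} > 2Kr^{2/3}$, and these two sets together miss at least $|P_{alt}^3| - |E'| \geq 1$ critical pair.

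Finally I would argue that, for the chosen pair $(s^\ast, t^\ast)$, the distance in $G' = (V_{alt}^3, E_{alt}^3 \cup E')$ is at least $3D - 2Kr^{2/3} = 3D - O(r^{2/3})$. The easy half is a ``savings budget'' accounting: since $A_{(s^\ast, t^\ast)} = 0$, only short shortcuts on the critical path can contribute, and each such shortcut saves exactly one edge, so on-path shortcuts yield at most $B_{(s^\ast, t^\ast)} = O(r^{2/3})$ savings. The main obstacle---and the step that requires the most care---is ruling out additional savings from shortcuts with at least one endpoint \emph{off} the critical path of $(s^\ast, t^\ast)$: such a shortcut $(u,v)$ could in principle shorten the $s^\ast$-to-$t^\ast$ distance whenever $u$ is reachable from $s^\ast$ and $t^\ast$ is reachable from $v$, via a detour around the critical path. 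Here I would invoke Claim~\ref{claim:Galt3_unique_path} together with the rigid forward-cone reachability inherited from the base graph in Section~\ref{subsec:base_graph} to show that any shortest $s^\ast$-to-$t^\ast$ path in $G'$ must essentially track the critical path of $(s^\ast, t^\ast)$, so that every shortcut it uses is already accounted for by $A_{(s^\ast,t^\ast)}$ or $B_{(s^\ast,t^\ast)}$, completing the bound.
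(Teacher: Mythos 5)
Your proposal is correct and follows essentially the same route as the paper's proof: split the shortcuts into those of length $2$ and those of length $\geq 3$, use \cref{claim:Galt3_intersect_p2} to argue that each long shortcut can help at most one critical pair and \cref{claim:Galt3_p2_bound} to bound the influence of each length-$2$ shortcut by $|V_2^+(r)| = \Theta(r^{2/3})$, and then pigeonhole over the $|P_{alt}^3| > |E'|$ critical pairs to exhibit a pair whose distance drops by only $O(r^{2/3})$. The only differences are cosmetic: the paper does the counting by averaging the aggregate savings $k\cdot|V_2^+(r)|$ of the length-$2$ shortcuts over the more than $k$ pairs untouched by long shortcuts, rather than your per-pair $A/B$ Markov-style bound, and it leaves implicit the step you rightly flag (a shortcut is a transitive-closure edge, so unrolling any shortcutted $s^\ast$--$t^\ast$ walk gives a path in $\Galt^3$ which by \cref{claim:Galt3_unique_path} must be the critical path, forcing every useful shortcut's endpoints onto it).
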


\begin{proof}
Recall that the diameter of $\Galt^3$ is $3D$. 

We say that a shortcut has \emph{length} $\ell$ if its two endpoints are of distance $\ell$ in $\Galt^3$.
Note that every useful shortcut must have length at least 2. We consider the contribution to the decrease in graph diameter due to shortcuts of length exactly 2 and shortcuts of length greater than 2 separately. The reason we consider these two cases separately is because by \cref{claim:Galt3_intersect_p2} a shortcut of length exactly 2 can decrease the distance between \emph{multiple} critical paths.
Let $k$ be the number of shortcuts of length exactly 2. So $|E'|-k$ is the number of shortcuts of length greater than $2$.

We first consider shortcuts of length greater than $2$. By Lemma~\ref{claim:Galt3_intersect_p2}, any intersecting pair of critical paths can only intersect on a path of length $2$, a single edge, or a single vertex. Thus, any shortcut of length greater than $2$ can only decrease the length of one critical path. Thus, the number of critical paths that are shortened by shortcuts of length greater than $2$ is at most $|E'|-k$. By assumption, $|E'|<|\Palt^3|$, so the diameter of the union of $\Galt^3$ and the shortcuts of length greater than $2$ is $3D$. 

Now we consider the shortcuts of length exactly 2. By Lemma~\ref{claim:Galt3_p2_bound}, a shortcut of length $2$ can decrease the distance between at most $|V_2^+(r)|$ critical pairs. Note that a shortcut of length 2 can only decrease the distance between each such critical pairs by 1. Thus, the total decrease in distance across all critical pairs by shortcuts of length 2 is at most $k\cdot|V_2^+(r)|$. Since the distances between at most $|E'|-k$ critical pairs are decreased by shortcuts of length greater than $2$, the distances between at least $|\Palt^3|-|E'|+k$ critical pairs are \emph{not} decreased by shortcuts of length greater than $2$. By assumption, $|E'|<|\Palt^3|$ so $|\Palt^3|-|E'|+k>k$. That is, if the diameter decreases below $3D$, this means that the shortcuts of length 2 must decrease the distances between at least $k$ critical pairs.  Then since the total decrease in distance across all critical pairs by shortcuts of length 2 is at most $k\cdot|V_2^+(r)|$, and this decrease in distance is distributed over at least $k$ critical pairs, the diameter can decrease by at most \[\frac{k\cdot|V_2^+(r)|}{k}=|V_2^+(r)|=\Theta(r^{2/3}).\]
Thus, the final diameter is $3D - O(r^{2/3})$.
\end{proof}

We set $|P_{alt}^3|=|E'|+1$ and then solve for the parameters $r$ and $D$, as follows. By assumption, $|E'|=\Theta(m)=\Theta(|\Ealt^3|)$, so $|P_{alt}^3|=\Theta(|\Ealt^3|)$. Thus, by the bounds on $|P_{alt}^3|$ and $|\Ealt^3|$ from \cref{thm:Galt3}, we have \[(Dr)^4 r^{2} = \Theta(D(Dr)^4 r^{2/3}).\] Solving for $r$, we get $r = \Theta(D^{3/4})$. Then we have
\[|\Valt^3| = n = \Theta (D(Dr)^4) = \Theta(D^8).\]
Hence this shows that $D = \Omega(n^{1/8})$.

Since $|P_{alt}^3|=|E'|+1$ and $r^{2/3}=o(D)$, Lemma~\ref{lem:Galt3_diameter} implies that a shortcut set of size $O(m)$ cannot reduce the diameter below $3D-O(r^{2/3})=3D-o(D)=\Omega(n^{1/8})$. 

\section{Applications to Additive Spanners \& Emulators}\label{sec:spanner_emulator}

In this section, we first construct the graph $\Gobs$ by applying the obstacle product on $\Galt^2$ from Subsection~\ref{subsec:Gobs}. Then we show that the graph $\Gobs$ can give lower bound results for additive spanners and emulators. The obstacle product is a technique first introduced by Abboud and Bodwin \cite{AB} and it was also applied to the lower bound constructions for spanners and emulators by Huang and Pettie \cite{Huang2018LowerBO}. All graphs in this section are undirected, so when we refer to $\Galt^2$ we mean the graph obtained by removing the direction of every edge in $\Galt^2$. 

\subsection{Construction of the Obstacle Product Graph $\Gobs$}\label{subsec:Gobs}

In this section, we construct the graph $\Gobs = (\Vobs, \Eobs)$ from $\Galt^2$ by applying two operations: \emph{edge expansion} and \emph{clique replacement}.

\begin{itemize}
    \item[--] \textbf{Edge Expansion}: We subdivide every edge in $\Galt^2$ into a path of length $D$.
    For any edge $(v,w)$ in $\Galt^2$, we let $w_v$ be vertex in the subdivision of $(v,w)$ that is closest to $v$ i.e. there is an edge between $v$ and $w_v$. 
    
    \item[--] \textbf{Clique Replacement}: The clique replacement step is illustrated in Figure~\ref{fig:vertex_expansion}. For every vertex $v\in \Galt^2$ (i.e. not including the newly created vertices from the edge expansion step) in layer $i$ for $1\leq i\leq 2D-1$ (i.e. every layer except for the first and last), we replace $v$ with a bipartite clique $K_v$ where each side of the bipartition has size $|V_2^+(r)|$. We will refer to the edges in these bipartite cliques as \emph{clique edges}. 
    We construct the edges incident to $K_v$ as follows.
    Recall that in $\Galt^2$ the degree of $v$ to each of its adjacent layers is $|V_2^+(r)|$.
    For every edge $(w,v)$ in $\Galt^2$ where $w$ in layer $i-1$, we include an edge between $w_v$ and a vertex on the left side of $K_v$, such that each $w_v$ has an edge to a different vertex in $K_v$. Similarly, for every edge $(v,w')$ in $\Galt^2$ where $w'$ is in layer $i+1$, we include an edge between $w'_v$ and a vertex on the right side of $K_v$, such that each $w'_v$ has an edge to a different vertex in $K_v$. 
    Note that for every path of length 2 in $\Galt^2$ from a vertex $w$ in layer $i-1$, to $v$, to a vertex $w'$ in layer $i+1$, there is a unique path of length 3 in $G_{obs}$ from $w_v$ to $w'_v$ through $K_v$, and vice versa. 
   
\end{itemize}

\begin{figure}[h]
    \centering
    \includegraphics[width=10cm]{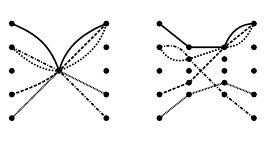}
    \caption{Left: A variety of paths traversing through a vertex in $G_{alt}^2$. Right: The corresponding paths in $G_{obs}$, after the clique replacement step. Note that not all of the existing clique edges are shown. 
    }
    \label{fig:vertex_expansion}
\end{figure}

We will now define the critical paths $\Pobs$ of $\Gobs$.
We note that because we did not perform the clique replacement step on the vertices in the first and last layers of $\Galt^2$, each such vertex uniquely corresponds to a single vertex in $\Gobs$. 
We let the critical pairs $\Pobs$ in $\Gobs$ be the pairs that uniquely correspond to the critical pairs $\Palt^2$ in $\Galt^2$. The critical path in $\Gobs$ between a pair of vertices in $\Pobs$ is defined to be exactly analogous to the critical path in $\Galt^2$ between the corresponding pair in $\Palt^2$. That is, for each edge in the critical path in $\Galt^2$ we simply take the path in $\Gobs$ created by subdividing this edge; then we connect these paths via the appropriate clique edges.

Using the bound $|V_2^+(r)| = \Theta(r^{2/3})$ from \cref{lem:polytope_vertex_num}, we can compute the quantities of $\Gobs$ as follows (the justification for each of these equations is below): 
\begin{align}
    |\Vobs| &= \Theta(D^2 (Dr)^3r^{2/3} + D(Dr)^3\cdot |V_2^+(r)) = \Theta (D^2 (Dr)^3r^{2/3})\label{eq16}\\
    |\Eobs| &= \Theta (D^2 (Dr)^3r^{2/3} +D(Dr)^3|V_2^+(r)|^2) = \Theta(D^2(Dr)^3r^{2/3}+D(Dr)^3r^{4/3})\label{eq17}\\
    |\Pobs| &=  \Theta((Dr)^3r^{4/3})\label{eq18}.
\end{align}
In \cref{eq16}, the first term counts the number of vertices due to edge expansion and the second term counts the number of vertices due to clique replacement. Specifically, in the edge expansion step, we create $D$ new vertices for every edge in $\Galt^2$ and in the clique replacement step, every vertex is replaced by $2|V_2^+(r)|$ vertices. Similarly in \cref{eq17}, the first term counts the number of edges due to edge expansion and the second term counts the number of clique edges. In the edge expansion step, an edge in $\Galt^2$ is replaced by $D$ edges and in the clique replacement step, every vertex is replaced by a clique of $\Theta(|V_2^+(r)^2)$ edges. Finally, in \cref{eq18}, notice that the the number of critical pairs remains the same as for $\Galt^2$.

\subsection{Application to Additive Spanners}\label{subsec:spanner}

Now we show how $\Gobs$ can give the lower bound stated in \cref{thm:spanner_LB}. We first recall \cref{thm:spanner_LB}: \thmspannerLB*

We first give a lemma relating the additive stretch of a spanner on $\Gobs$ with the number of clique edges it contains.

\begin{lemma}\label{lem:spanner_stretch}
Every spanner of $\Gobs$ with additive stretch at most $2D-1$ must contain at least $D|\Pobs|$ clique edges.
\end{lemma}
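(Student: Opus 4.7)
The plan is to argue that each critical pair individually forces at least $D$ clique edges into the spanner, and that the clique edges ``charged'' to different critical paths are disjoint; summing over critical pairs then gives the claimed lower bound of $D|\Pobs|$.

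First I would pin down the length of a critical path in $\Gobs$. Each of the $2D$ edges of the underlying critical path in $\Galt^2$ becomes a subdivided path of $D$ edges under edge expansion, and passing through each of the $2D-1$ intermediate cliques $K_{v_i}$ costs exactly one clique edge, so a critical path has length $2D^2+2D-1$. A spanner $H$ with additive stretch at most $2D-1$ must therefore contain, for every critical pair $(s,t)\in\Pobs$, some $s$-$t$ path $P'$ in $\Gobs$ of length at most $2D^2+4D-2$.

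Next I would show that the ``skeleton'' walk of $P'$ in $\Galt^2$ is forced to be the unique critical path. Since $P'$ is simple, every subdivided edge that $P'$ touches must be traversed in full, because a partial traversal would require $P'$ to reverse inside a subdivision and revisit a vertex. Writing $P'$ as a concatenation of $f$ full subdivided-edge traversals separated by $f-1$ clique-internal segments, each contributing at least one clique edge, yields $|P'|\ge fD+(f-1)$. Layer parity of $\Galt^2$ forces $f$ to be even and at least $2D$; the next admissible value $f=2D+2$ already gives $|P'|\ge 2D^2+4D+1$, exceeding the budget. Hence $f=2D$, and \cref{claim:Galt2_unique_path} (applied to the underlying length-$2D$ walk from $v_0$ to $v_{2D}$, which is automatically a simple path) pins the skeleton down as the unique critical path of the pair. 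In particular, $P'$ enters each intermediate clique $K_{v_i}$ at the specific left vertex $a_i$ determined by the incoming critical edge and exits at the specific right vertex $b_i$ determined by the outgoing critical edge.

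The charging step then proceeds as follows: for each intermediate clique $K_{v_i}$, if the direct clique edge $(a_i,b_i)$ used by the critical path is absent from $H$, the $a_i$-to-$b_i$ portion of $P'$ inside $K_{v_i}$ must have length at least $3$, since $K_{v_i}$ is bipartite so no $a_i$-$b_i$ path of length $2$ exists, and any escape outside $K_{v_i}$ followed by re-entry is ruled out by simplicity together with the skeleton being pinned to the critical path. Thus every missing critical clique edge inflates $|P'|$ by at least $2$. If $m$ of the $2D-1$ critical clique edges are missing from $H$, then $2D^2+2D-1+2m\le|P'|\le 2D^2+4D-2$, so $m\le D-1$ and at least $D$ critical clique edges of every critical path must lie in $H$. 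Finally I would verify the charging is injective across pairs: by \cref{claim:Galt2_intersect}, two distinct critical paths in $\Galt^2$ meet in at most a single vertex or single edge, so at any common clique $K_v$ they have either different predecessors at $v$ or different successors at $v$, forcing their respective clique edges in $K_v$ to have distinct left or distinct right endpoints. Summing then yields at least $D|\Pobs|$ clique edges in $H$. The step requiring the most care is the skeleton argument, where one must cleanly rule out ``non-local'' detours that leave the vicinity of the critical path; this is exactly what the parity-plus-uniqueness argument accomplishes.
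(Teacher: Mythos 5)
Your argument is correct and is essentially the paper's proof run in the direct (charging) direction rather than as a pigeonhole/contradiction: the same two mechanisms drive both — a missing clique edge on the pinned critical skeleton costs $+2$ by bipartiteness of $K_v$, and any deviation of the skeleton costs roughly $+2D$ via the layered structure of $\Galt^2$ and the length-$D$ edge expansion — and the disjointness of clique edges across critical paths that you verify explicitly via \cref{claim:Galt2_intersect} is exactly what the paper's pigeonhole step uses implicitly. One small imprecision: not every one of your $f-1$ separating segments is clique-internal, because a detouring path may pivot at an original layer-$0$ or layer-$2D$ vertex (these are not replaced by cliques), so $|P'|\ge fD+(f-1)$ is not justified as written; however, at most $f-2D$ separators can be of that type, which gives $|P'|\ge fD+2D-1$, still exceeding the budget $2D^2+4D-2$ (by exactly $1$) whenever $f\ge 2D+2$, so your conclusion $f=2D$ and the rest of the argument stand.
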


\begin{proof}
Assume for contradiction that there exists a spanner $H$ with additive stretch $2D-1$ that contains at most $D|\Pobs|-1$ clique edges. Since each of the $|\Pobs|$ critical paths in $\Gobs$ traverses through $2D-1$ clique edges, by the pigeonhole principle, there exists some critical path $p$ such that at least $D$ clique edges from $p$ are not included in the spanner $H$. 

Consider the sequence of bipartite cliques $K_a, K_b, K_c,\dots$ that $p$ hits, and let $p_{alt}$ be the path on the corresponding sequence of vertices $a,b,c\dots$ in $\Galt^2$. Intuitively, $p_{alt}$ is the path that $p$ becomes after ``reversing'' the obstacle product. Let $p_H$ be the shortest path in $H$ between the endpoints of $p$, and define $p_{H_{alt}}$ analogously to $p_{alt}$.

\begin{enumerate}
    \item If $p_{H_{alt}}$ is the same path as $p_{alt}$, then $p$ and $p_H$ traverse the same sequence of bipartite cliques. Recall from the definition of the clique replacement step that for every path of length 2 in $p_{alt}$ from a vertex $w$ in a layer $i-1$, to a vertex $v$ in layer $i$, to a vertex $w'$ in layer $i+1$, there is a unique path of length 3 in $p$ from $w_v$ to $w'_v$ through $K_v$. However, we know that there are $D$ clique edges on $p$ that are not included in $H$. Thus, $p_H$ cannot use these $D$ clique edges. Since the path of length 3 in $p$ from each $w_v$ to $w'_v$ is unique, $p_H$ must take a longer path from $w_v$ to $w'_v$ for each of these $D$ cliques. Since each clique is bipartite, $p_H$ must traverse at least 2 more edges than $p$ at each of these $D$ cliques. Hence $|p_H|\geq |p|+2D$.
    
    \item If $p_{H_{alt}}$ is not the same path as $p_{alt}$, then $p_{H_{alt}}$ traverses at least two more edges than $p_{alt}$ since $\Galt^2$ is bipartite. 
    Since each edge in $\Galt^2$ is replaced by a path of length $D$ in the edge expansion step, we have $|p_H|\geq |p|+2D$.
\end{enumerate}

Therefore in either case, the additive stretch is at least $2D$, which is a contradiction.
\end{proof}

Now we are ready to prove Theorem~\ref{thm:spanner_LB}.
Given a spanner $H$ of $\Gobs$ with $|E(H)|=\Theta(|\Vobs|)$,
we can set $D|\Pobs| = |E(H)|+1=\Theta(|\Vobs|)$ and then solve for $D$ and $r$ as follows. Using the bounds on $|\Pobs|$ and $|\Vobs|$ from \cref{subsec:Gobs}, we have
\[D(Dr)^3r^{4/3} = {\Theta}(D^2(Dr)^3r^{2/3}).
\]
Solving for $r$, we have $r = {\Theta}(D^{3/2})$, and therefore $|\Vobs| = n= {\Theta}(D^{21/2}) $. By Lemma~\ref{lem:spanner_stretch}, since $H$ contains less than $D|\Pobs|$ edges (and thus less than $D|\Pobs|$ clique edges), $H$ has additive stretch at least $2D= {\Omega}(n^{2/21})$.

\subsection{Application to Additive Emulators}\label{subsec:emulator}

Now we show how $\Gobs$ can give the lower bound stated in \cref{thm:emulator_LB}. First we recall \cref{thm:emulator_LB}: \thmemulatorLB*

The key property relating the additive stretch and the size of the emulator on $\Gobs$ is stated in the following lemma.

\begin{lemma}\label{lem:emulator_stretch}
Every emulator of $\Gobs$ with additive stretch at most $2D-1$ must contain at least $|\Pobs|/2$ edges.
\end{lemma}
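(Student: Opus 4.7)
The approach is a charging argument. For each critical pair $(s,t)\in \Pobs$ I will identify an oriented witness edge of the emulator $H$, and then show every oriented emulator edge serves as the witness for at most one critical pair. Since each undirected edge has two orientations, this gives $|\Pobs|\leq 2|E(H)|$.

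To define the witness, fix $(s,t)\in \Pobs$ and let $L := \dist_G(s,t) = 2D^2+2D-1$ be the length of its unique critical path in $\Gobs$. Let $P_H(s,t)$ be a shortest $s$--$t$ path in $H$; the additive-stretch assumption bounds its total weight by $L+(2D-1)$, and the emulator property gives $w_H(u,v)\geq \dist_G(u,v)$ for every edge. Walking along $P_H(s,t)$ from $s$ while tracking cumulative $H$-weight, let the oriented witness $(u^\star,v^\star)$ be the unique oriented edge where the cumulative weight transitions from $<L/2$ to $\geq L/2$, and charge $(s,t)$ to it. Writing $A := \dist_G(s,u^\star)$, $B := \dist_G(v^\star,t)$, and $k := \dist_G(u^\star,v^\star)$, one checks from the weight bounds that $A+k+B \in [L,\, L+2D-1]$, $A<L/2$, and $B\leq L/2+2D-1$; in particular $u^\star$ sits at $G$-distance $A\approx L/2-k$ from $s$ and $v^\star$ at $G$-distance $B\approx L/2$ from $t$, each modulo an $O(D)$ tolerance.

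The core of the proof will be to show that the oriented witness $(u^\star,v^\star)$ determines $(s,t)$ uniquely. I would project $u^\star$ and $v^\star$ back into $\Valt^2$ through the obstacle product — a subdivision vertex on the expansion of an edge $(a,b)$ of $\Galt^2$ projects to whichever of $a,b$ it is nearer, and a clique vertex of $K_v$ projects to $v$. The distance constraints $A$ and $B$ pin down the layer indices of the projections in $\Galt^2$, and then the uniqueness of critical paths in $\Galt^2$ (\cref{claim:Galt2_unique_path}) combined with their near-disjointness (\cref{claim:Galt2_intersect}) forces the underlying critical path — and hence $(s,t)$ — to be uniquely determined by the oriented witness. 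Summing over all critical pairs then gives $|\Pobs|\leq 2|E(H)|$, i.e.\ $|E(H)|\geq |\Pobs|/2$.

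The main obstacle is handling the $O(D)$ slack: $u^\star$ and $v^\star$ need not themselves lie on the critical path of $(s,t)$, only within $G$-distance $O(D)$ of it. Fortunately, the $D$-fold edge expansion confines this slack to at most one subdivided edge or clique on each side of the witness, which is exactly the right scale for the projection step to succeed; but turning this geometric intuition into a rigorous uniqueness statement will need a small case analysis separating the three possible types of witness endpoint (layer-$0$/$2D$ vertex, subdivision vertex, clique vertex) and invoking the strict-convexity argument that underlies \cref{claim:Galt2_unique_path}.
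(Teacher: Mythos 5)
Your overall accounting is fine if the key claim held, but the heart of your argument --- that each \emph{oriented} emulator edge can be the $L/2$-crossing witness of at most one critical pair --- is exactly the content of the lemma, and your sketch does not establish it. The constraints you actually extract ($A<L/2$, $B\le L/2+2D-1$, $A+k+B\le L+2D-1$, plus projecting $u^\star,v^\star$ to nearby layers of $\Galt^2$) can be satisfied simultaneously by two \emph{distinct} critical pairs for the same oriented edge. Concretely, take two critical paths of $\Galt^2$ crossing at a vertex $w$ in layer $D$; in $\Gobs$ let $u$ be a subdivision vertex at depth $c_1$ below $K_w$ on the \emph{incoming} subdivided edge of the second path, and $v$ a subdivision vertex at depth $c_2$ above $K_w$ on the \emph{outgoing} subdivided edge of the first path, joined by an emulator edge of weight $c_1+c_2+1=\dist_{\Gobs}(u,v)$. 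For the second pair, $\dist(s_2,u)=D^2+D-1-c_1<L/2$ and $\dist(v,t_2)\le D^2+D+1+c_2$, so the route through $(u,v)$ costs only $L+2+2c_2\le L+2D-1$; for the first pair, whenever its source can reach the layer-$(D-1)$ vertex of the second path by some $(D-1)$-step path of $\Galt^2$ (which uniqueness of \emph{critical} paths does not forbid, since that endpoint pair is not critical), one gets $\dist(s_1,u)=D^2+D-1-c_1<L/2$ as well, and the route $s_1\rightsquigarrow u\to v\rightsquigarrow t_1$ has length exactly $L$. In both cases the cumulative weight crosses $L/2$ precisely at the edge $(u,v)$, with the same orientation. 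So ``layer-pinning plus uniqueness plus near-disjointness'' cannot force the pair to be determined by the witness; clique vertices and subdivision vertices within the $O(D)$ slack region are shared by many pairs' near-shortest routes, and your proposal gives no mechanism to rule out such double charging. This is a genuine gap, and it is unclear the injectivity statement is even true.

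The paper avoids per-edge charging altogether: it converts the emulator $H$ into a spanner $H'$ by replacing each weighted edge of $H$ by a shortest path of $\Gobs$ between its endpoints, observes that any such path contains at most $2D$ clique edges (consecutive clique edges are separated by length-$D$ subdivided segments, and distances in $\Gobs$ are $O(D^2)$), and then invokes \cref{lem:spanner_stretch}: a spanner with stretch at most $2D-1$ needs at least $D|\Pobs|$ clique edges, whence $2D|E(H)|\ge D|\Pobs|$, i.e.\ $|E(H)|\ge|\Pobs|/2$. That argument deliberately allows a single emulator edge to help many critical pairs, compensating with the factor-$D$ clique-edge count; if you want to salvage a charging proof you would need an argument of comparable global strength, not the local uniqueness facts about $\Galt^2$ alone.
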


\begin{proof}
Let $H$ be an emulator of $\Gobs$ with additive stretch at most $2D-1$. We show that we can construct a spanner $H'$ from $H$ with the same stretch (but more edges). Without loss of generality, we assume that every edge $(u,v)\in H$ has weight equal to $d_{\Gobs}(u,v)$.

For a shortest path $p$ in $H$ between a critical pair in $\Pobs$, let $p'$ be the corresponding path in $\Gobs$ where every weighted edge is replaced by a path of corresponding length. For every critical pair in $\Pobs$ with corresponding path $p$ in $H$ and $p'$ in $\Gobs$, we include $p'$ in $H'$. Notice that for any $(s,t)\in \Pobs$, we have $d_H(s,t) = d_{H'}(s,t)$. Furthermore, $H'$ is a spanner with at most $2D|E(H)|$ clique edges since every weighted edge in $H$ can contribute at most $2D$ clique edges to $H'$. Then by Lemma~\ref{lem:spanner_stretch}, $2D|E(H)|\geq D|\Pobs|$, so $|E(H)|\geq |\Pobs|/2$.
\end{proof}

Now we are ready to prove Theorem~\ref{thm:emulator_LB}.
Given an emulator $H$ of $\Gobs$ with $|E(H)|=\Theta(|\Vobs|)$,
we set $|\Pobs|/2 = |E(H)|+1=\Theta(|\Vobs|)$ and solve for $D$ and $r$ as follows. Using the bounds on $|\Pobs|$ and $|\Vobs|$ from \cref{subsec:Gobs}, we have 
\[(Dr)^3r^{4/3} = \Theta(D^2(Dr)^3r^{2/3}).
\]
Therefore we have $r = \Theta(D^3)$ and $|\Vobs| = \Theta(D^{16})$. By Lemma~\ref{lem:emulator_stretch}, since $H$ contains less than $|\Pobs|/2$ edges, $H$ has additive stretch at least $2D= {\Omega}(n^{1/16})$.

\section{Conclusion}\label{sec:conclusion}

\paragraph{Limitations on Applying Our Alternation Product to Shortcut Lower Bounds.} We will show that $\Galt^3$ gives the best possible lower bound on the $O(m)$ shortcut problem using our improved alternation product technique.
Consider generalizing the construction of $\Galt^2$ and $\Galt^3$ described in \cref{subsec:G_alt_2,subsec:G_alt_3} by taking the alternation product of more copies of $G_0$ to construct the graph $\Galt^{k-1}$ in the natural way as follows:

\begin{itemize}
    \item \textbf{Vertex Set}: The vertex set is identified with lattice points in $(\Z/(3Dr)\Z)^k$ and the graph has $(k-1)D+1$ layers.
    
    \item \textbf{Edge Set}:  If $i\equiv p \pmod{k-1}$, then we put an edge between $v_x^i, v_y^{i+1}$ with $x = (x_1,\dots, x_k), y = (y_1,\dots, y_k)$ if and only if there exists $\gamma\in V_2^+(r)$ such that $(x_{p+1}, x_{p+2})+\gamma \equiv (y_{p+1}, y_{p+2})\pmod{3Dr}$.
    
    \item \textbf{Critical Paths}: The critical paths are of the form $v_x^0, v_y^{(k-1)D}$ with $x = (x_1,\dots, x_k), y = (y_1,\dots, y_k)$, where $y$ satisfies for all $1\leq i\leq k$, $y_i = x_i + D \gamma_2^{i-1} + D\gamma_1^i$ for $(\gamma^1,\dots, \gamma^{k-1})\in (V_2^+(r))^{k-1}$ with $\gamma^i = (\gamma_1^i, \gamma_2^i)\in V^+_2(r)$. In this formula, we assume $\gamma^0 = 0, \gamma^k = 0$, so $y_1 = x_1+D\gamma_1^1$ and $y_k = x_k + D\gamma_2^{k-1}$.
\end{itemize}

Then we have 
\begin{align*}
    |V_{alt}| &= \Theta((Dr)^kD)\\
    |E_{alt}| &= \Theta((Dr)^kD|V_2^+(r)|)\\
    |P_{alt}| &= \Theta((Dr)^k|V_2^+(r)|^{k-1}) .
\end{align*}
Setting $|E_{alt}| = \Theta(|P_{alt}|)$ implies that $r = \Theta(D^{3/(2(k-2))})$ and we get
\[n = |V_{alt}| = \Theta((Dr)^kD) = \Theta\left(D^{\frac{2k^2+k-4}{2(k-2)}}\right).\]
The function $f(k) = \frac{2k^2+k-4}{2(k-2)}$ minimizes at $k = 2+\sqrt{3}\approx 3.7$ with $f(k) = \frac{9}{2}+2\sqrt{3}\approx 7.9$. Therefore, this implies that our lower bound at $k = 4$ is in fact the best we can get from this type of construction. 

\paragraph{Acknowledgements}

The authors would like to thank Gary Hoppenworth for pointing out a calculation error in the previous version of the paper.

\printbibliography

@article{CE,
author = {Coppersmith, Don and Elkin, Michael},
title = {Sparse Sourcewise and Pairwise Distance Preservers},
year = {2006},
issue_date = {February 2006},
publisher = {Society for Industrial and Applied Mathematics},
address = {USA},
volume = {20},
number = {2},
issn = {0895-4801},
url = {https://doi.org/10.1137/050630696},
doi = {10.1137/050630696},
journal = {SIAM J. Discret. Math.},
month = feb,
pages = {463–501},
numpages = {39},
keywords = {spanners, distance preservation, graph theory}
}

@article{bollobas2005sparse,
  title={Sparse distance preservers and additive spanners},
  author={Bollob{\'a}s, B{\'e}la and Coppersmith, Don and Elkin, Michael},
  journal={SIAM Journal on Discrete Mathematics},
  volume={19},
  number={4},
  pages={1029--1055},
  year={2005},
  publisher={SIAM}
}

@article{alon2002testing,
  title={Testing subgraphs in large graphs},
  author={Alon, Noga},
  journal={Random Structures \& Algorithms},
  volume={21},
  number={3-4},
  pages={359--370},
  year={2002},
  publisher={Wiley Online Library}
}

@article{aingworth1999fast,
  title={Fast estimation of diameter and shortest paths (without matrix multiplication)},
  author={Aingworth, Donald and Chekuri, Chandra and Indyk, Piotr and Motwani, Rajeev},
  journal={SIAM Journal on Computing},
  volume={28},
  number={4},
  pages={1167--1181},
  year={1999},
  publisher={SIAM}
}

@article{peleg1989graph,
  title={Graph spanners},
  author={Peleg, David and Sch{\"a}ffer, Alejandro A},
  journal={Journal of graph theory},
  volume={13},
  number={1},
  pages={99--116},
  year={1989},
  publisher={Wiley Online Library}
}

@inproceedings{chechik2013new,
  title={New additive spanners},
  author={Chechik, Shiri},
  booktitle={Proceedings of the twenty-fourth annual ACM-SIAM symposium on Discrete algorithms},
  pages={498--512},
  year={2013},
  organization={Society for Industrial and Applied Mathematics}
}

@inproceedings{BW,
author = {Bodwin, Greg and {Vassilevska Williams}, Virginia},
title = {Better Distance Preservers and Additive Spanners},
year = {2016},
isbn = {9781611974331},
publisher = {Society for Industrial and Applied Mathematics},
address = {USA},
booktitle = {Proceedings of the Twenty-Seventh Annual ACM-SIAM Symposium on Discrete Algorithms},
pages = {855–872},
numpages = {18},
location = {Arlington, Virginia},
series = {SODA ’16}
}

@inproceedings{Bod,
author = {Bodwin, Greg},
title = {Linear Size Distance Preservers},
year = {2017},
publisher = {Society for Industrial and Applied Mathematics},
address = {USA},
booktitle = {Proceedings of the Twenty-Eighth Annual ACM-SIAM Symposium on Discrete Algorithms},
pages = {600–615},
numpages = {16},
location = {Barcelona, Spain},
series = {SODA ’17}
}

@misc{chimani2021spanner,
      title={Spanner Approximations in Practice}, 
      author={Markus Chimani and Finn Stutzenstein},
      year={2021},
      eprint={2107.02018},
      archivePrefix={arXiv},
      primaryClass={cs.DS}
}

@article{ahmed2020graph,
  title={Graph spanners: A tutorial review},
  author={Ahmed, Reyan and Bodwin, Greg and Sahneh, Faryad Darabi and Hamm, Keaton and Jebelli, Mohammad Javad Latifi and Kobourov, Stephen and Spence, Richard},
  journal={Computer Science Review},
  volume={37},
  pages={100253},
  year={2020},
  publisher={Elsevier}
}

@article{dor2000all,
  title={All-pairs almost shortest paths},
  author={Dor, Dorit and Halperin, Shay and Zwick, Uri},
  journal={SIAM Journal on Computing},
  volume={29},
  number={5},
  pages={1740--1759},
  year={2000},
  publisher={SIAM}
}

@article{baswana2010additive,
  title={Additive spanners and ($\alpha$, $\beta$)-spanners},
  author={Baswana, Surender and Kavitha, Telikepalli and Mehlhorn, Kurt and Pettie, Seth},
  journal={ACM Transactions on Algorithms (TALG)},
  volume={7},
  number={1},
  pages={5},
  year={2010},
  publisher={ACM}
}

@article{alon2000decreasing,
  title={Decreasing the diameter of bounded degree graphs},
  author={Alon, Noga and Gy{\'a}rf{\'a}s, Andr{\'a}s and Ruszink{\'o}, Mikl{\'o}s},
  journal={Journal of Graph Theory},
  volume={35},
  number={3},
  pages={161--172},
  year={2000},
  publisher={Wiley Online Library}
}

@article{tan2017shortcutting,
  title={Shortcutting directed and undirected networks with a degree constraint},
  author={Tan, Richard B and Van Leeuwen, Erik Jan and Van Leeuwen, Jan},
  journal={Discrete Applied Mathematics},
  volume={220},
  pages={91--117},
  year={2017},
  publisher={Elsevier}
}

@inproceedings{demaine2010minimizing,
  title={Minimizing the diameter of a network using shortcut edges},
  author={Demaine, Erik D and Zadimoghaddam, Morteza},
  booktitle={Scandinavian Workshop on Algorithm Theory},
  pages={420--431},
  year={2010},
  organization={Springer}
}

@article{chung1984diameter,
  title={Diameter bounds for altered graphs},
  author={Chung, FRK and Garey, MR},
  journal={Journal of graph theory},
  volume={8},
  number={4},
  pages={511--534},
  year={1984},
  publisher={Wiley Online Library}
}

@incollection{raskhodnikova2010transitive,
  title={Transitive-closure spanners: A survey},
  author={Raskhodnikova, Sofya},
  booktitle={Property testing},
  pages={167--196},
  year={2010},
  publisher={Springer}
}

@inproceedings{Huang2018LowerBO,
  title={Lower Bounds on Sparse Spanners, Emulators, and Diameter-reducing shortcuts},
  author={Huang, Shang-En and Pettie, Seth},
  booktitle={16th Scandinavian Symposium and Workshops on Algorithm Theory},
  year={2018}
}

@article{fineman2019nearly,
  title={Nearly work-efficient parallel algorithm for digraph reachability},
  author={Fineman, Jeremy T},
  journal={SIAM Journal on Computing},
  volume={49},
  number={5},
  pages={STOC18--500},
  year={2019},
  publisher={SIAM}
}

@article{elkin2021ultra,
  title={Ultra-Sparse Near-Additive Emulators},
  author={Elkin, Michael and Matar, Shaked},
  journal={arXiv preprint arXiv:2106.01036},
  year={2021}
}

@inproceedings{cao2020improved,
  title={Improved Work Span Tradeoff for Single Source Reachability and Approximate Shortest Paths},
  author={Cao, Nairen and Fineman, Jeremy T and Russell, Katina},
  booktitle={Proceedings of the 32nd ACM Symposium on Parallelism in Algorithms and Architectures},
  pages={511--513},
  year={2020}
}

@inproceedings{liu2019parallel,
  title={Parallel reachability in almost linear work and square root depth},
  author={Liu, Yang P and Jambulapati, Arun and Sidford, Aaron},
  booktitle={2019 IEEE 60th Annual Symposium on Foundations of Computer Science (FOCS)},
  pages={1664--1686},
  year={2019},
  organization={IEEE}
}

@article{ullman1991high,
  title={High-probability parallel transitive-closure algorithms},
  author={Ullman, Jeffrey D and Yannakakis, Mihalis},
  journal={SIAM Journal on Computing},
  volume={20},
  number={1},
  pages={100--125},
  year={1991},
  publisher={SIAM}
}

@article{abboud2018hierarchy,
  title={A hierarchy of lower bounds for sublinear additive spanners},
  author={Abboud, Amir and Bodwin, Greg and Pettie, Seth},
  journal={SIAM Journal on Computing},
  volume={47},
  number={6},
  pages={2203--2236},
  year={2018},
  publisher={SIAM}
}

@inproceedings{woodruff2010additive,
  title={Additive spanners in nearly quadratic time},
  author={Woodruff, David P},
  booktitle={International Colloquium on Automata, Languages, and Programming},
  pages={463--474},
  year={2010},
  organization={Springer}
}

@article{knudsen2017additive,
  title={Additive spanners and distance oracles in quadratic time},
  author={Knudsen, Mathias B{\ae}k Tejs},
  journal={arXiv preprint arXiv:1704.04473},
  year={2017}
}

@inproceedings{knudsen2014additive,
  title={Additive spanners: A simple construction},
  author={Knudsen, Mathias B{\ae}k Tejs},
  booktitle={Scandinavian Workshop on Algorithm Theory},
  pages={277--281},
  year={2014},
  organization={Springer}
}

@article{pettie2009low,
  title={Low distortion spanners},
  author={Pettie, Seth},
  journal={ACM Transactions on Algorithms (TALG)},
  volume={6},
  number={1},
  pages={1--22},
  year={2009},
  publisher={ACM New York, NY, USA}
}

@article{AB,
author = {Abboud, Amir and Bodwin, Greg},
title = {The 4/3 Additive Spanner Exponent Is Tight},
year = {2017},
issue_date = {September 2017},
publisher = {Association for Computing Machinery},
address = {New York, NY, USA},
volume = {64},
number = {4},
issn = {0004-5411},
url = {https://doi.org/10.1145/3088511},
doi = {10.1145/3088511},
journal = {J. ACM},
month = sep,
articleno = {Article 28},
numpages = {20},
keywords = {spanners, lower bounds, Graph theory, shortest paths, graph sketching}
}

@inproceedings{BW15,
author = {Bodwin, Gregory and {Vassilevska Williams}, Virginia},
title = {Very Sparse Additive Spanners and Emulators},
year = {2015},
isbn = {9781450333337},
publisher = {Association for Computing Machinery},
address = {New York, NY, USA},
url = {https://doi.org/10.1145/2688073.2688103},
doi = {10.1145/2688073.2688103},
booktitle = {Proceedings of the 2015 Conference on Innovations in Theoretical Computer Science},
pages = {377–382},
numpages = {6},
keywords = {graph emulators, graph algorithms, graph spanners},
location = {Rehovot, Israel},
series = {ITCS '15}
}

@article {BL98,
    AUTHOR = {B\'{a}r\'{a}ny, Imre and Larman, David G.},
     TITLE = {The convex hull of the integer points in a large ball},
   JOURNAL = {Math. Ann.},
  FJOURNAL = {Mathematische Annalen},
    VOLUME = {312},
      YEAR = {1998},
    NUMBER = {1},
     PAGES = {167--181},
      ISSN = {0025-5831},
   MRCLASS = {52B20 (11P21 52A27)},
  MRNUMBER = {1645957},
MRREVIEWER = {J. M. Wills},
       DOI = {10.1007/s002080050217},
       URL = {https://doi-org.libproxy.mit.edu/10.1007/s002080050217},
}

@inproceedings{Hesse03,
author = {Hesse, William},
title = {Directed Graphs Requiring Large Numbers of Shortcuts},
year = {2003},
isbn = {0898715385},
publisher = {Society for Industrial and Applied Mathematics},
address = {USA},
booktitle = {Proceedings of the Fourteenth Annual ACM-SIAM Symposium on Discrete Algorithms},
pages = {665–669},
numpages = {5},
location = {Baltimore, Maryland},
series = {SODA '03}
}

@InProceedings{thorup_digraph,
author="Thorup, Mikkel",
editor="Mayr, Ernst W.",
title="On shortcutting digraphs",
booktitle="Graph-Theoretic Concepts in Computer Science",
year="1993",
publisher="Springer Berlin Heidelberg",
address="Berlin, Heidelberg",
pages="205--211",
abstract="A conjecture about digraphs is presented. The conjecture has relevance to parallel computing, and is supported by a few concrete results.",
isbn="978-3-540-47554-5"
}

@article{thorup1995shortcutting,
  title={Shortcutting planar digraphs},
  author={Thorup, Mikkel},
  journal={Combinatorics, Probability and Computing},
  volume={4},
  number={3},
  pages={287--315},
  year={1995},
  publisher={Cambridge University Press}
}

@inproceedings{yao1982space,
  title={Space-time tradeoff for answering range queries},
  author={Yao, Andrew C},
  booktitle={Proceedings of the fourteenth annual ACM symposium on Theory of computing},
  pages={128--136},
  year={1982}
}

@article{chazelle1987computing,
  title={Computing on a free tree via complexity-preserving mappings},
  author={Chazelle, Bernard},
  journal={Algorithmica},
  volume={2},
  number={1},
  pages={337--361},
  year={1987},
  publisher={Springer}
}

@inproceedings{DHZ,
author = {Dor, D. and Halperin, S. and Zwick, U.},
title = {All Pairs Almost Shortest Paths},
year = {1996},
publisher = {IEEE Computer Society},
address = {USA},
booktitle = {Proceedings of the 37th Annual Symposium on Foundations of Computer Science},
pages = {452},
series = {FOCS '96}
}

@article{ADDJS_steiner_spanner93,
title = "On sparse spanners of weighted graphs",
year = "1993",
month = dec,
day = "1",
doi = "10.1007/BF02189308",
language = "English (US)",
volume = "9",
pages = "81--100",
journal = "Discrete and Computational Geometry",
issn = "0179-5376",
publisher = "Springer New York",
number = "1",
}

@inproceedings{EP01add2spanner,
author = {Elkin, Michael and Peleg, David},
title = {$(1 + \eps,\beta)$-Spanner Constructions for General Graphs},
year = {2001},
isbn = {1581133499},
publisher = {Association for Computing Machinery},
address = {New York, NY, USA},
url = {https://doi.org/10.1145/380752.380797},
doi = {10.1145/380752.380797},
booktitle = {Proceedings of the Thirty-Third Annual ACM Symposium on Theory of Computing},
pages = {173–182},
numpages = {10},
location = {Hersonissos, Greece},
series = {STOC '01}
}

@article{AlDhalaan2021FastCO,
  title={Fast Construction of 4-Additive Spanners},
  author={Bandar Al-Dhalaan},
  journal={ArXiv},
  year={2021},
  volume={abs/2106.07152}
}

\end{document}